\newcommand{\graph}{G}
\newcommand{\const}{c}
\newcommand{\vect}{f}
\newcommand{\dimension}{d}
\newcommand{\prtin}{[d]}
\newcommand{\ti}{i}
\newcommand{\tj}{j}
\def\vol{\textup{vol}}
\newcommand{\tl}{\ell}
\def\eps{\epsilon}
\newcommand{\ttt}{t}
\newcommand{\prt}{T}
\newcommand{\boundMatrix}{M}
\newcommand{\declareperson}[1]{\expandafter\newcommand\csname#1\endcsname[1]{\textcolor{orange}{#1: ##1}}}
\title{An Improved Trickle Down Theorem for Partite Complexes}
 \author{Dorna Abdolazimi \thanks{\href{mailto:dornaa@cs.washington.edu}{dornaa@cs.washington.edu}. Research supported by NSF grant CCF-1907845 and Air Force Office of Scientific Research grant FA9550-20-1-0212.}}
\author{Shayan Oveis Gharan \thanks{\href{mailto:shayan@cs.washington.edu}{shayan@cs.washington.edu}. Research supported by Air Force Office of Scientific Research grant FA9550-20-1-0212, Simons Investigator grant.}}
\affil{University of Washington}
\begin{document}
\maketitle

\begin{abstract}
We prove a strengthening of the trickle down theorem for partite complexes. Given a $(\dimension+1)$-partite $\dimension$-dimensional simplicial complex, we show that if  ``on average'' the links of  faces of co-dimension 2 are $\frac{1-\delta}{d}$-(one-sided) spectral expanders, then the link of any face of co-dimension $k$ is an  $O(\frac{1-\delta}{k\delta})$-(one-sided) spectral expander, for all $3\leq k\leq d+1$. For an application, using our theorem as a black-box, we show that links of faces of co-dimension $k$ in recent constructions of bounded degree high dimensional expanders have spectral expansion at most $O(1/k)$ fraction of the spectral expansion of the links of the worst faces of co-dimension $2$.
\end{abstract}

\section{Introduction}
A simplicial complex $X$ on a finite ground set $[n]=\{0,\dots,n\}$ is a downwards closed collection of subsets of $[n]$, i.e. if $\tau\in X$ and $\sigma \subset \tau$, then $\sigma \in X$. The elements of $X$ are called {\em faces}, and the maximal faces are called {\em facets}. We say that a face $\tau$ is of dimension $k$ if $|\tau| = k + 1$ and write $\dimfn(\tau) = k$. 
 A simplicial complex $X$ is a {\em pure} $\dimension$-dimensional complex if every facet  has dimension $\dimension$. In this paper, all  simplicial complexes are assumed to be pure. Given a  $\dimension$-dimensional complex $X$, for any   $0 \leq i \leq \dimension$,    define  $X(i) = \{\tau \in X : \dimfn(\tau) = i\}$.  Moreover, the co-dimension of a face $\tau \in X$ is defined as $\codim (\tau)  \coloneqq \dimension - \dimfn(\tau)$. For a face $\tau \in X$, define the  {\em link} of $\tau$ as the simplicial complex $X_{\tau} = \{\sigma \setminus \tau : \sigma \in X, \sigma \supset \tau\}$. Note that $X_\tau$ is a $(\codim(\tau) -1)$- dimensional complex.
 
  A  $(\dimension+1)$-partite  complex is a a  $\dimension$-dimensional complex such that $X(0)$ can be (uniquely) partitioned into sets $\prt_0 \cup \dots \cup \prt_{\dimension}$ such that for  every facet  $\tau \in X (\dimension)$,  we have $|\tau \cap \prt_i|=1$ for all $i \in [\dimension]$. The type of any face $\tau \in X$ is defined as $\type (\tau) = \{\ti \in \prtin : |\tau \cap \prt_\ti|=1\}$. 

We equip $X$ with a probability distribution $\pi$  supported on all {\em facets} of $X$ and we denote this pair by $(X,\pi)$. 
 For a face $\tau\in X$, $\pi$ induces a conditional distribution $\pi_{\tau}$ on facets of $X_\tau$ where for each facet $\sigma\in X_\tau$, 
 $$\pi_\tau(\sigma)=\frac{\pi(\sigma \cup \tau )}{\sum_{\text{facet }\sigma'\in X_\tau} \pi(\sigma' \cup \tau)}.$$
For each face $\tau$ of co-dimension at least 2 
the  1-skeleton of $(X_\tau, \pi_\tau)$ is a  weighted graph with vertices  $X_\tau (0)$, edges  $X_\tau (1)$, and edge weights given by $\P_{\sigma\sim\pi_\tau}[\{x,y\}\subseteq \sigma]$ for  each edge $\{x,y\}$. Note that when $\tau$ is of co-dimension 2,  the complex $(X_\tau, \pi_{\tau})$ is itself a weighted graph.  We say that a complex $X$ is {\it totally connected} if the 1-skeleton of the link of any face $\tau$ of co-dimension at least 2 is connected. 

 \begin{definition}[Local Spectral High Dimensional Expander]
 We say that the link of a face $\tau$ of co-dimension at least  $ 2$ of a  $\dimension$-dimensional (weighted) complex $(X,\pi)$ is a $\lambda$-(one sided) spectral expander if the second largest eigenvalue of the simple random walk on the 1-skeleton of $(X_\tau,\pi_\tau)$ is at most $\lambda$. We say that $(X,\pi)$
 is a $(\gamma_2,\gamma_3,\dots,\gamma_{d+1})$-local spectral expander if the link of  any face $\tau$ of co-dimension at least 2  is a $\gamma_{\codim(\tau)}$-spectral expander.
  When the complex $(X, \pi)$ is clear in the context, for an integer $2\leq k\leq \dimension+1$, we write $\gamma_k$ to denote the largest 2nd eigenvalue of the simple random walk on the 1-skeleton of all links of faces of co-dimension $k$ of the complex. 
 \end{definition}
 Over the last few years, the study of local spectral high dimensional expanders  (HDX) has revolutionized several areas of Math and theoretical computer science, namely in analysis of Markov chains \cite{ALOV19,  ALO20,   CLV21,ALO21}, coding theory \cite{DELLM22}, and elsewhere \cite{AJT19, DHKNT21, DFHT21}. 
 One can generally divide the family of HDXes studied in recent works into two groups: (i) Dense Complexes. Here, we have a HDX with exponentially large number of facets, i.e., $|X(0)|^d$. One typically encounters these objects in studying Markov Chain Monte Carlo technique where we use a Markov Chain to sample from an exponentially large probability distribution. Perhaps the simplest such family is the complex of all independent sets of a matroid.
 (ii) Sparse/Ramanujan Complexes. Here we have a HDX where every vertex (of $X(0)$) only appear in constant number of facets, independent of $|X(0)|$. See, \cite{LSV05,KO18,OP22} for explicit constructions. These  objects have been useful in constructing double samplers \cite{DHKNT21}, agreement testers  \cite{DK17, DD19}, or locally testable codes \cite{DELLM22}.
 
 One of the main aspects of local spectral expanders  is their ``local to global phenomenon'', often referred to as the Garland's method or the trickle down theorem \cite{Opp18}.
 \begin{theorem}[Trickle Down Theorem]\label{thm:trickledown}
 Given a totally connected complex $(X, \pi)$, if   $\gamma_2\leq \frac{1-\delta}{d}$ for some $0<\delta\leq 1$, then $\gamma_k \leq \frac{1-\delta}{d- (k-2)(1-\delta)} \leq \frac{1-\delta}{d\delta}$ for all $2 \leq k \leq \dimension$. 
 \end{theorem}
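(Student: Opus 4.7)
The plan is to prove the one-step recurrence $\gamma_k \leq \gamma_{k-1}/(1-\gamma_{k-1})$ for $3 \leq k \leq d$ by induction on $k$, and then to observe that the advertised closed-form bound is the fixed-point iteration of this recurrence starting from $\gamma_2 \leq (1-\delta)/d$: a direct computation shows that if $\gamma_{k-1} \leq (1-\delta)/(d-(k-3)(1-\delta))$ then $\gamma_{k-1}/(1-\gamma_{k-1}) \leq (1-\delta)/(d-(k-2)(1-\delta))$, so the closed form propagates along the induction.

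The inductive step reduces to a fact about a single link. Fix a face $\tau$ with $\codim(\tau) = k$ and let $Y = X_\tau$ with its induced measure. Let $P$ be the simple random walk on the 1-skeleton of $Y$, let $\mu$ be its (normalized) stationary distribution on $Y(0)$, and let $f$ be a second-eigenvalue eigenfunction normalized by $\|f\|_\mu = 1$ and $f \perp \mathbf{1}$, so that $Pf = \lambda f$ with $\lambda = \lambda_2(P)$. For each $v \in Y(0)$ let $P_v$ and $\mu_v$ be the analogous objects on the link $Y_v$. A short computation from the definitions of the induced measures shows $\mu_v(u) = P(v,u)$ for every vertex $u$ of $Y_v$, so $\mathbb{E}_{u \sim \mu_v}[f(u)] = (Pf)(v) = \lambda f(v)$; writing $f|_{Y_v(0)} = \lambda f(v) + g_v$ therefore makes $g_v$ mean-zero under $\mu_v$.

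Two Garland-type identities now drive the argument:
\[
\sum_v \mu(v)\,\|f\|^2_{\mu_v} \;=\; \|f\|^2_\mu, \qquad \sum_v \mu(v)\,\langle f, P_v f\rangle_{\mu_v} \;=\; \langle f, Pf\rangle_\mu.
\]
The first is immediate from the stationarity of $\mu$ under $P$; the second is proved by expanding both sides as weighted sums over $2$-faces of $Y$ and verifying that each $2$-face contributes equally on both sides. Substituting the decomposition of $f|_{Y_v}$ into these identities and using $Pf = \lambda f$ yields $\sum_v \mu(v)\,\|g_v\|^2_{\mu_v} = 1-\lambda^2$ and $\sum_v \mu(v)\,\langle g_v, P_v g_v\rangle_{\mu_v} = \lambda - \lambda^2$. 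Since $g_v$ is mean-zero under $\mu_v$ and $Y_v = X_{\tau \cup \{v\}}$ is the link of a face of co-dimension $k-1$, the inductive hypothesis gives $\langle g_v, P_v g_v\rangle_{\mu_v} \leq \gamma_{k-1}\,\|g_v\|^2_{\mu_v}$; integrating against $\mu(v)$ and combining the two resulting estimates produces $\lambda(1-\lambda) \leq \gamma_{k-1}(1-\lambda)(1+\lambda)$. Total connectivity of $X$ forces $\lambda < 1$, so dividing by $1-\lambda$ gives $\lambda \leq \gamma_{k-1}/(1-\gamma_{k-1})$, completing the inductive step.

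I expect the main obstacle to be establishing the second Garland identity with the correct normalizations. Three different induced measures on $Y$ are in play (on vertices, edges, and $2$-faces), each with its own total mass, and one must verify by careful bookkeeping that the factors line up when the sum over vertices $v$ of inner products computed in the link $Y_v$ is reorganized as a sum over $2$-faces of $Y$. Once these identities are in place, the rest is essentially the algebra in the paragraph above.
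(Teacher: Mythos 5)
Your proof is correct: the localization identities, the decomposition $f|_{Y_v}=\lambda f(v)+g_v$ with $\mu_v(u)=P(v,u)$, and the resulting recurrence $\gamma_k\le \gamma_{k-1}/(1-\gamma_{k-1})$ are exactly the standard Garland/Oppenheim argument, and the closed-form bound does propagate through that recurrence as you compute. The paper does not reprove this theorem (it cites \cite{Opp18} and only recalls the scalar recursion in \eqref{eq:tricklerecursion}), and your argument coincides with that standard proof, so there is nothing to flag.
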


 The trickle down theorem has found numerous applications in proving bounds on local spectral expansion of simplicial complexes. To invoke the theorem one needs to inspect all faces of co-dimension 2 to find the worst 2nd eigenvalue. If we get lucky and this number is below $1/d$, then, the trickle down theorem kicks in and inductively bounds the spectral expansion of all links of the complex. 
 
There are, however, two pitfalls for the theorem: i) The required bound on $\gamma_2$ is too small and often not satisfiable. In particular, for many dense complexes in counting and sampling applications  that satisfy $\gamma_k=O(1/k)$ for $k\geq \Omega(d)$ (see e.g., \cite{ALO20,CLV21}), the links of faces of co-dimension 2 are only $\Theta(1)$-spectral expanders. ii) Even if $\gamma_2\ll 1/d$, the trickle down theorem only implies $\gamma_k \simeq \gamma_2$, i.e., $\gamma_k$ does not increase too much as $k$ increases.   This is in contrast with the fact that, for many dense complexes, one can observe a steep decrease in spectral expansion as the co-dimension increases, i.e., $\gamma_k \lesssim \gamma_2/k$. 

Such a decrease has not been known for any sparse complex. This led some experts to conjecture that, perhaps, dense and sparse complexes exhibit a different pattern of local spectral expansion; in particular, unlike dense HDX, local spectral expansion does not decay for sparse complexes.


In this paper, we prove a generalization of the trickle down theorem for {\em partite complexes} that shows that even if $\gamma_2=\Theta(1)$, but ``on average'' the links of  faces of co-dimension 2 are $<1/d$-spectral expanders, then we have  $\gamma_k\leq O(1/k)$ for all $3\leq k\leq d+1$. Surprising to us, our average condition is satisfied by some recent construction of (sparse) bounded degree high dimensional expanders \cite{KO18,OP22}. In particular, as we explain below, one can use our theorem to prove a significantly better local spectral expansion for the Kaufman-Opennheim construction in a black-box manner.

  \subsection{Main Contribution}
  
  We start by stating two special cases of our theorem. We need the following definition. 

\begin{definition}\label{def:Delta(i)}
    Given a $(\dimension+1)$-partite complex $(X, \pi)$ with parts $\prtin$, for every $i \in \prtin$, define $$\Delta_{(X, \pi)}(\ti) =  | \{ \tj \in \prtin \setminus \ti :  \exists \tau \text{ of } \type (\tau) = \prtin \setminus \{i, j\} \text{ s.t. } \lambda_2(P_\tau) >0  \}|,$$ i.e. $\Delta_{(X, \pi)}(\ti)$  is the number of parts $j \neq i$ for which there exists a face of type $\prtin \setminus \{i, j\}$ whose link is not a 0-spectral expander. Moreover, define $\Delta_{(X, \pi)} = \max_{\ti \in \prtin} \Delta_{(X, \pi)} (\ti) $. We drop the subscripts $(X, \pi)$ when the complex is clear in the context. 
\end{definition}

     \begin{theorem}\label{thm:Deltasameeps}
        Let $(X, \pi)$ be a $(\dimension+1)$-partite (weighted) totally connected complex.  For some $0< \delta <1$, assume that  
        \begin{align*} \gamma_2 \leq \frac{\delta^2}{10 (1+\ln \Delta)} \quad \text{and} \quad  \gamma_2
        \leq \frac{1- \delta}{\Delta +\ln \Delta}.
        \end{align*}
        Then,  the link of any face $\tau$ of co-dimension $k$  of $X$ has spectral expansion 
        $$\begin{cases}
        \frac{c(1-\delta)}{k\delta} & \text{if } k\geq \Delta,\\
        \frac{c(1-\delta)\frac{k+\ln k}{\Delta+\ln\Delta}}{k\delta}&\text{if } k<\Delta,
        \end{cases}
        $$
        for some constant $c \leq 2$ that depends on $\delta$. 
    \end{theorem}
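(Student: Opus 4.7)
My plan is to extend the trickle-down theorem (Theorem~\ref{thm:trickledown}) by exploiting both the partite structure of $X$ and the sparsity of ``bad'' pairs quantified by $\Delta$. The proof proceeds by induction on the co-dimension $k$; the base case $k=2$ is immediate from the hypothesis $\gamma_2 \le (1-\delta)/(\Delta + \ln\Delta)$. For the inductive step, fix a face $\tau$ of co-dim $k$ with type $T$ and let $S = \prtin \setminus T$, so $|S|=k$. Since $X_\tau$ is $k$-partite, the simple random walk $P_\tau$ on its 1-skeleton decomposes into bipartite blocks $P_\tau^{\{i,j\}}$ for pairs $\{i,j\}\subseteq S$, each representing the walk between parts $i$ and $j$. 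The first analytic step is a partite variant of Oppenheim/Garland's identity that expresses $\lambda_2(P_\tau)$ as a combination of the spectral data of the $P_\tau^{\{i,j\}}$ plus an Oppenheim-style correction controlled by $\lambda_2(P_{\tau\cup v})$ for vertices $v \in X_\tau(0)$, to which the primary inductive hypothesis applies directly.

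The crucial new ingredient is the classification of each pair $\{i,j\}\subseteq S$ as \emph{good} (when every co-dim 2 face of type $\prtin\setminus\{i,j\}$ is a $0$-spectral expander, so its bipartite link is rank one) or \emph{bad}. By the definition of $\Delta$, each part $i$ belongs to at most $\Delta$ bad pairs, so among the $k-1$ pairs of $S$ containing $i$, at most $\Delta$ are bad. The heart of the proof will be a structural lemma showing that good pairs contribute only mildly to $\lambda_2(P_\tau^{\{i,j\}})$ at co-dim $k>2$: although the bipartite link at $\tau$ is a weighted average of rank-one co-dim 2 links and need not itself be rank one, the small absolute bound $\gamma_2 \le \delta^2/(10(1+\ln\Delta))$ in the first hypothesis enables a secondary argument that propagates an effective spectral bound on good pairs through this averaging. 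Plugging this into the partite decomposition of $P_\tau$, together with the primary inductive bound on bad pairs, yields a recursion whose solution gives the two regimes of the statement: for $k \ge \Delta$, only $\Delta$ of the $k-1$ pairs around each part contribute non-trivially, producing the $O((1-\delta)/(k\delta))$ decay; for $k<\Delta$ the recursion essentially runs the standard trickle down with effective dimension $\Delta+\ln\Delta$ in place of $d$, producing the $(k+\ln k)/(\Delta+\ln\Delta)$ interpolation factor.

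The main obstacle is precisely the structural lemma about good pairs. Being a $0$-spectral expander is \emph{not} preserved under the averaging implicit in taking links of higher-co-dim faces (an average of rank-one bipartite matrices need not be rank one), so we cannot simply drop the good pairs from the sum. Instead I expect to extract a quantitative residual bound by a refined spectral analysis of weighted averages of rank-one bipartite stochastic matrices, crucially using the smallness of $\gamma_2$ from the first hypothesis to absorb the error terms; this is where the factors $\delta^2$ and $\ln\Delta$ in the hypothesis should appear naturally. Once the lemma is in place, the remainder is a careful but essentially routine iterative calculation, from which the constants of the statement (including $c\le 2$) fall out of the bookkeeping.
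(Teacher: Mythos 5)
The pivotal step of your plan---the ``structural lemma'' asserting that good pairs contribute only a mild residual to the bipartite blocks of $P_\tau$ at co-dimension $k>2$---is exactly where the difficulty of the theorem is concentrated, and it is left unproven; it is also doubtful in the form you pose it. As you note yourself, a good pair expresses \emph{conditional} independence (the link of every face of type $\prtin\setminus\{i,j\}$ is rank one), and at a deeper link the bipartite marginal between two good types is a mixture of rank-one matrices that need not be close to rank one in any way controlled by $\gamma_2$ alone; no ``refined spectral analysis of weighted averages of rank-one bipartite stochastic matrices'' is supplied, and the hypothesis $\gamma_2\le \delta^2/(10(1+\ln\Delta))$ does not by itself produce a per-pair residual bound. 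A second gap is your outer loop: a partite Garland-type identity for $\lambda_2(P_\tau)$ in terms of the blocks plus corrections by $\lambda_2(P_{\tau\cup v})$, iterated on \emph{scalar} second eigenvalues, runs into the known obstruction that the eigenspaces of different links are misaligned, which forces worst-case (not averaged) bounds at every step; this is precisely why the scalar trickle down (\cref{thm:trickledown}) can never give decay $\gamma_k=O(1/k)$ even when $\gamma_2\ll 1/d$. Without a substitute mechanism, the recursion you sketch cannot produce the claimed $1/k$ and $(k+\ln k)/(\Delta+\ln\Delta)$ factors, so the ``routine iterative calculation'' at the end is not available.

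For comparison, the paper handles both points differently. Good pairs are exploited \emph{structurally}, not quantitatively: if all pairs across a split of the types have $\epsilon_{\{i,j\}}=0$, the link factors exactly as a product of complexes (\cref{lem:pairwiseindependence} and \cref{lem:main_product}, via rank-one bipartite adjacency matrices and factorization of the generating polynomial), and the eigenvector-misalignment issue is bypassed by the matrix trickle-down framework (\cref{thm:matrix-trickle-down}) instantiated with diagonal, per-type upper-bound matrices $M_\tau=\Pi_\tau D_S/(k-1)$ whose entries $\vect_S(\ti)$ are built from harmonic-number weightings of the $\epsilon_{\{\ti,\tj\}}$ (\cref{prop:Dcondition} and the proof of \cref{thm:main_technical}). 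Moreover, \cref{thm:Deltasameeps} itself is not proved by a fresh induction at all: it follows in a few lines by applying \cref{thm:main_technical} as a black box to each link $(X_\tau,\pi_\tau)$, with the same $\delta$ when $k\ge\Delta$, and with the modified parameter $\delta_k=1-(1-\delta)\frac{k+\ln k}{\Delta+\ln\Delta}\ge\delta$ when $k<\Delta$; verifying its two hypotheses for the link uses only $\epsilon_{\{\ti,\tj\}}\le\gamma_2$, $\Delta_{(X_\tau,\pi_\tau)}(\ti)\le\min\{k,\Delta\}$, and $H_{m-1}\le 1+\ln m$. If you want to salvage your outline, replace the missing residual-bound lemma by this product/conditional-independence structure together with matrix (per-type) bounds, or simply reduce to \cref{thm:main_technical} as above.
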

    Note that, for $\Delta = d$, this theorem retrieves  \cref{thm:trickledown} up to a lower order term in the condition on $\gamma_2$ and a constant in the bounds on local spectral expansions. 
    When  $\Delta \ll d$, this theorem is a significant improvement over \cref{thm:trickledown}. 
    Roughly speaking, this theorem says that, if the complex has many faces of co-dimension 2 whose links are 0-expanders, one needs to satisfy a much weaker condition on $\gamma_2$ to get $O(1/k)$-spectral expansion for faces of co-dimension $k$. In other words, the faces of co-dimension 2 that have perfect spectral expansion can compensate for faces of co-dimension 2 that have bad spectral expansion. 

    Next, we state the second special case of our theorem. For every integers $1 \leq n$, let $H_n = \sum_{i=1}^n \frac{1}{i}$ be the n-th harmonic number. Moreover, for any $1 \leq i \leq n$ define $H_n(i) =   \sum_{j = i}^n \frac{1}{j}$ and let $H_n (0) = H_n(1)$.

     \begin{theorem}
        Let $(X, \pi)$ be a $(\dimension+1)$-partite (weighted) totally connected complex. For any distinct $\ti, \tj \in \prtin$, let $\epsilon_{\{\ti, \tj\}} = \max_{\tau: \type(\tau)=\prtin \setminus \{\ti, \tj\}} \lambda_2(P_\tau)$ 
        be the  2nd largest eigenvalue of the simple random walk matrices on  $(X_\tau, \pi_\tau)$ for all  $\tau$ of type $\prtin \setminus \{\ti, \tj\}$. 
        For some $0< \delta <1$, assume that  for every $\ti \in \prtin$,  
        \begin{eqnarray*} \epsilon_{\{\ti, \tj\}} \cdot H_{d} &\leq& \frac{\delta^2}{10},  \forall \tj \neq \ti  \quad \text{and}  \\ \sum_{ \ell = 1 }^{d}  \epsilon_{\{\ti, \tj_\ell\}}  \cdot \frac{H_{d} (\ell -1)}{d}
        &\leq& \frac{1- \delta}{d},
        \end{eqnarray*}
        where $\tj_0 \dots, \tj_\dimension$ is an ordering of $\prtin \setminus i$ such that $\epsilon_{\{\ti, \tj_0\}} \leq \dots \leq \epsilon_{\{\ti, \tj_\dimension\}}$. 
        Then,   $X$ is  $(\frac{c(1-\delta)}{\delta},  \dots,\frac{c(1-\delta)}{ \dimension \delta})$-local spectral expander for some constant $c \leq 2$ that depends on $\delta$.  
    \end{theorem}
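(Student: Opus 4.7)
The plan is to deduce this theorem from a more refined, pair-sensitive trickle-down inequality whose hypotheses are stated in terms of the full collection of pair-wise expansion parameters $\epsilon_{\{i,j\}}$ rather than the single worst-case $\gamma_2$. Under such a general inequality, the present statement is obtained by specializing the pair-wise hypotheses so that the conclusion simplifies to the clean bound $\gamma_k \leq c(1-\delta)/(k\delta)$, in direct analogy with how \cref{thm:Deltasameeps} is obtained by collapsing all pair-wise parameters to a single value.

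The first step is to set up an inductive inequality on co-dimension $k$ that exploits the partite structure. For a face $\tau$ of co-dimension $k \geq 3$, write $S = [d] \setminus \type(\tau)$, so that the 1-skeleton of $X_\tau$ is $k$-partite with parts indexed by $S$. The simple random walk on this 1-skeleton decomposes naturally into the pair-wise swap walks between parts $i$ and $j$ for $i \neq j$ in $S$, and each such swap walk is an average, over faces of type $[d] \setminus \{i,j\}$ that extend $\tau$, of a bipartite random walk whose non-trivial eigenvalues are bounded in absolute value by $\epsilon_{\{i,j\}}$. Combined with a Garland/Oppenheim style local-to-global argument inside the link, this should yield a one-step inequality of the schematic form
\[
\gamma_k(\tau) \;\leq\; \max_{i \in S}\;\frac{1}{k-1}\sum_{j \in S \setminus \{i\}}\bigl(a_k\,\epsilon_{\{i,j\}} + b_k\,\gamma_{k-1}\bigr),
\]
with explicit coefficients $a_k,b_k$ where $b_k$ is close to $1$ and $a_k$ is of order $1$.

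The second step is to iterate this inequality, starting from the base case $k=2$ where $\gamma_2(\tau) \leq \epsilon_{\{i,j\}}$ is immediate by definition. At level $k$ the contribution of each $\epsilon_{\{i,j\}}$ is scaled by roughly $1/k$, so telescoping over co-dimensions from $2$ up to $d$ produces a cumulative weight of $\sum_{k=2}^{d} 1/k \approx H_d$ in front of each $\epsilon$. After sorting $\epsilon_{\{i,j_0\}} \leq \dots \leq \epsilon_{\{i,j_d\}}$ and re-indexing the sum by rank rather than by part, the tail weights $H_d(\ell-1)/d$ emerge because the $\ell$-th smallest $\epsilon_{\{i,j_\ell\}}$ can only enter the recursion from co-dimension $\ell$ onwards. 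The pointwise hypothesis $\epsilon_{\{i,j\}} \cdot H_d \leq \delta^2/10$ then plays the role of $\gamma_2 \leq \delta^2/(10(1+\ln\Delta))$ in \cref{thm:Deltasameeps}, ensuring that no single pair spoils the base of the induction, while the averaged hypothesis $\sum_\ell \epsilon_{\{i,j_\ell\}}\cdot H_d(\ell-1)/d \leq (1-\delta)/d$ is the weighted analogue of $\gamma_2 \leq (1-\delta)/(\Delta+\ln\Delta)$ and, plugged into the iterated recursion, gives exactly $\gamma_k \leq c(1-\delta)/(k\delta)$ with some constant $c \leq 2$.

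The main obstacle lies in step one: in the partite setting one must decompose the 1-skeleton random walk on $X_\tau$ into its pair-wise swap-walk pieces, bound each piece by the appropriate $\epsilon_{\{i,j\}}$ together with the inductive bound on links of faces of co-dimension one smaller, and recombine to a scalar bound on $\gamma_k$ with constants sharp enough that iteration produces harmonic numbers rather than something looser. Once that step is carried out cleanly, the remaining work is a routine harmonic-sum calculation.
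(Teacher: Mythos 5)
There is a genuine gap, and it sits exactly where you flag it: the one-step inequality in your first step, $\gamma_k(\tau) \leq \max_{i}\frac{1}{k-1}\sum_{j}\bigl(a_k\,\epsilon_{\{i,j\}}+b_k\,\gamma_{k-1}\bigr)$, is not something a Garland/Oppenheim-style argument delivers. The scalar trickle-down machinery gives a bound on $\lambda_2$ of the link of $\tau$ in terms of the \emph{worst} second eigenvalue over the links $X_{\tau\cup\{x\}}$, not in terms of a per-pair (or per-part) average: the bipartite swap-walk pieces corresponding to different pairs $\{i,j\}$ have incompatible eigenspaces, so their individual spectral bounds cannot simply be averaged into a bound for the full 1-skeleton walk. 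This is precisely the obstruction the paper points out in its proof overview, and it is the reason the argument is routed through the matrix trickle-down theorem of \cite{ALO21} (\cref{thm:matrix-trickle-down}): the scalar bounds are replaced by matrix upper bounds on $\Pi_\tau P_\tau$ that \emph{can} be averaged in the Loewner order. Without an argument for your schematic inequality, the rest of the plan (iterating and harvesting harmonic sums) has nothing to iterate; and the heuristic that ``the $\ell$-th smallest $\epsilon_{\{i,j_\ell\}}$ enters only from co-dimension $\ell$ onwards'' is asserted rather than derived, and does not reflect any mechanism you have established.

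For comparison, the paper does not prove this statement by a fresh induction at all: it is the specialization of \cref{thm:main_technical} obtained by bounding $\Delta(i)\le d$ and $H_{\Delta-1}\le H_d$ (together with the remark that the hypotheses pass to all links, giving the full tuple of expansion bounds). The main theorem itself is proved by feeding the matrix framework with diagonal matrices $M_\tau=\frac{\Pi_\tau D_S}{k-1}$ whose entries are type-dependent vectors $f_S$; \cref{prop:Dcondition} converts the matrix conditions into scalar recursions on $f_S$, the disconnected case of the auxiliary graph $G_S$ is handled by the product-decomposition \cref{lem:main_product}, and the harmonic numbers arise from the explicit choices \eqref{eq:fdef} and \eqref{eq:hdef} of the functions $g_{i,j}$ and $h_i$, not from telescoping $1/k$ factors across co-dimensions. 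If you want to salvage your outline, the missing ingredient is exactly a device that lets averages over pairs survive one inductive step — i.e., a matrix- or vector-valued strengthening of the recursion — at which point you would essentially be reconstructing the paper's argument.
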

  We remark that for every any $\ti \in \prtin$, $1 \leq \frac{\sum_{\ell=1}^{d} H_{d}(\ell-1)}{d} \leq 1+\frac{\ln d}{d}$. So, roughly speaking, the latter condition can be seen  as  $\mathbb{E}_j [\eps_{\{i,j\}}] \leq \frac{1-\delta}{d}$ for every $\ti \in \prtin$, where the expectation is weighted according to  $\frac{H_d(.)}{d}$. This is an improvement over the stronger condition in \cref{thm:trickledown}. Now, we state the main theorem. 
  
      \begin{theorem}[Main] \label{thm:main_technical}
        Let $(X, \pi)$ be a $(\dimension+1)$-partite (weighted) totally connected complex. For any distinct $\ti, \tj \in \prtin$, let $\epsilon_{\{\ti, \tj\}} = \max_{\tau: \type(\tau)=\prtin \setminus \{\ti, \tj\}} \lambda_2(P_\tau)$ 
        be the  2nd largest eigenvalue of the simple random walk matrices on  $(X_\tau, \pi_\tau)$ for all  $\tau$ of type $\prtin \setminus \{\ti, \tj\}$. 
        For some $0< \delta <1$, assume that  for every $\ti \in \prtin$,  
        \begin{eqnarray} \epsilon_{\{\ti, \tj\}} \cdot H_{\Delta -1} &\leq& \frac{\delta^2}{10},  \forall \tj \neq \ti  \quad \text{and} \label{eq:thmcond1} \\ \sum_{ \ell = 1 }^{\Delta(\ti)}  \epsilon_{\{\ti, \tj_\ell\}}  \cdot H_{\Delta(i)-1} (\ell -1)
        &\leq& 1- \delta,\label{eq:thmcond2} 
        \end{eqnarray}
        where $\tj_0 \dots, \tj_\dimension$ is an ordering of $\prtin \setminus i$ such that $\epsilon_{\{\ti, \tj_0\}} \leq \dots \leq \epsilon_{\{\ti, \tj_\dimension\}}$. 
        Then,  (the link of the emptyset of) $X$ is  a $\frac{c(1-\delta)}{ \dimension \delta}$-expander
        for $c=\frac{2(1+ \frac{\delta^2}{10})}{(1+\delta)}$.
    \end{theorem}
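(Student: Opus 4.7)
}
My plan is to prove the theorem by induction on the co-dimension $k$ of the face under consideration, establishing simultaneously for every face $\tau$ of co-dimension $k$ a bound on $\lambda_2(P_\tau)$ in terms of the $\epsilon_{\{i,j\}}$'s for $i, j \in \prtin \setminus \type(\tau)$. The desired statement for the empty face will follow by taking $k = d+1$. The base case $k=2$ holds by the very definition of $\epsilon_{\{i,j\}}$.

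For the inductive step, I would employ a partite refinement of Oppenheim's trickle-down argument. Recall that classical trickle-down relates the spectral gap of $P_\tau$ to the spectral gaps of the walks $P_{\tau \cup \{v\}}$ on the links of vertices $v \in X_\tau(0)$, via an inequality of the form
\begin{equation*}
\langle f, P_\tau f\rangle \;\le\; \tfrac{1}{\codim(\tau)-1}\bigl(\langle f, f\rangle + \sum_{v} \pi_\tau(v)\,\langle f_v, P_{\tau \cup \{v\}} f_v\rangle\bigr)
\end{equation*}
for any $f$ orthogonal to constants, where $f_v$ is the restriction of $f$ to $X_{\tau \cup \{v\}}(0)$. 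The partite structure lets me refine this: the restriction of $P_\tau$ that sends a vertex of type $i$ to a vertex of type $j$ is precisely the bipartite block governed by links of faces whose type excludes $\{i,j\}$. Thus I would decompose $f$ according to its type, write the Rayleigh quotient as a double sum over ordered type pairs $(i,j)$, and bound each off-diagonal bipartite block using the $\epsilon_{\{i,j\}}$ bound on links of co-dimension $2$ of the relevant shape, combined with the inductive hypothesis at co-dimension $k-1$.

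The harmonic tail $H_{\Delta(i) - 1}(\ell - 1)$ in \eqref{eq:thmcond2} is expected to arise as the telescoping of the $1/(\codim-1)$ factors in trickle-down applied $\Delta(i) - \ell + 1$ times: parts $j$ for which every face of co-type $\{i,j\}$ already has $\lambda_2 = 0$ contribute nothing and are absorbed in the denominator via the gap $\codim(\tau) - 1 - \#\{\text{trivial parts}\}$, which is why only $\Delta(i)$ enters the sum and not $d$. The ordering $\epsilon_{\{i,j_0\}} \le \cdots \le \epsilon_{\{i,j_\Delta\}}$ reflects a worst-case pairing: when bounding the contribution of part $j_\ell$ through all levels of the induction down to the current co-dimension, the number of levels it survives is $\Delta(i) - \ell + 1$, yielding the tail $H_{\Delta(i)-1}(\ell-1)$. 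The smallness hypothesis \eqref{eq:thmcond1} with an extra $\delta^2/10$ slack is almost certainly needed to absorb second-order perturbation terms coming from the fact that an inductive upper bound of the form $O((1-\delta)/(k\delta))$ has to be ``pushed through'' an averaging step that would otherwise blow up the effective $\gamma_2$ by exactly this amount.

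The main obstacle, I expect, is the inductive bookkeeping: writing down a clean recursive inequality on $\lambda_2(P_\tau)$ that (i) tracks which of the parts $j \in \prtin \setminus \type(\tau)$ still contribute non-trivially, (ii) correctly weights each contribution by the right harmonic tail determined by its position in the sorted order of $\epsilon$'s, and (iii) preserves the crucial factor $\frac{1}{\codim(\tau)-1}$ at each step so that the final bound is genuinely $O(1/k)$ rather than $O(1)$. Getting both conditions \eqref{eq:thmcond1} and \eqref{eq:thmcond2} to emerge cleanly from a single induction --- rather than as two independent conditions forced in ad hoc --- will require carefully splitting the Rayleigh quotient into a ``linear'' part (governed by condition \eqref{eq:thmcond2}) and a ``quadratic/error'' part (controlled by the tighter \eqref{eq:thmcond1}).
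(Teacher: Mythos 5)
Your plan, as written, has a genuine gap at its core: the inductive invariant you propose to carry — a scalar bound on $\lambda_2(P_\tau)$ for each face $\tau$, depending on $\type(\tau)$ — is not strong enough to exploit the averaged hypothesis \eqref{eq:thmcond2}. In the trickle-down step from co-dimension $k-1$ to $k$ you must control $\E_{v\sim\pi_{\tau,0}}\langle f_v, P_{\tau\cup v}f_v\rangle$, and with only per-type scalar bounds on $\lambda_2(P_{\tau\cup v})$ the contribution of each type is weighted by $\|f_v\|^2$, which depends on the unknown test vector; since the eigenspaces of the different links are unrelated, this forces you back to the worst-case $\max_j \epsilon_{\{i,j\}}$, i.e.\ exactly the classical \cref{thm:trickledown}, and the average condition never enters. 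This is precisely the obstruction the paper identifies, and it circumvents it by replacing scalar bounds with \emph{matrix} upper bounds: one keeps the invariant $\Pi_\tau P_\tau - 2\pi_{\tau,0}\pi_{\tau,0}^{\top} \preceq \frac{1}{\codim(\tau)-1}\Pi_\tau D_{\type(\tau)}$ with $D_S$ diagonal and constant on each part, and partiteness (the fact that the conditional marginal of each part sums to $1$) is what lets the averaging over $v$ act part-by-part on these diagonal bounds. Relatedly, your intended bound on the ``off-diagonal bipartite block'' of $P_\tau$ by $\epsilon_{\{i,j\}}$ does not work for $\codim(\tau)>2$: that block aggregates many co-dimension-2 links, and controlling such aggregates is the local-to-global problem itself, not a hypothesis.

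Two further ingredients of the actual argument are absent from your sketch and are not mere bookkeeping. First, your claim that parts $j$ with $\epsilon_{\{i,j\}}=0$ are ``absorbed in the denominator'' needs a rigorous mechanism: the paper proves (\cref{lem:pairwiseindependence}, \cref{lem:main_product}) that when all relevant co-dimension-2 links are $0$-expanders the link factorizes as a \emph{product} of complexes, and the matrix framework has a separate product rule for exactly this degenerate case; without it the quadratic recursion $\sum_{j}\vect_{S\cup j}(i)\le (k-2)\vect_S(i)-\vect_S^2(i)$ cannot be run when the epsilon-graph $G_S$ is disconnected. Second, the harmonic tails do not fall out of a telescoping of $1/(\codim-1)$ factors; they are built into an explicit ansatz, $g_{i,j}(\ell)=1+1.3\,\epsilon_{\{i,j\}}H_{\ell-1}$ and $h_i(\ell)=h_i(1)\bigl(1-c\sum_{m\le\ell}\epsilon_{\{i,j_m\}}H_{\ell-1}(m-1)\bigr)^{-1}$, which must be verified to satisfy the quadratic recursion and the normalization $\max_i \vect_S(i)\le\frac{(k-1)^2}{3k-1}$ needed for the rank-one correction; this verification is where \eqref{eq:thmcond1} (the $\delta^2/10$ slack) and \eqref{eq:thmcond2} are actually consumed. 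So the intuition about where the two conditions should act is reasonable, but the proposal as it stands does not contain the ideas (matrix/diagonal invariants, product structure from $0$-expansion, explicit harmonic ansatz) that make the proof go through.
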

    \begin{remark}
    If, for some $\delta >0$, the conditions of the above theorem hold for a complex $(X, \pi)$, then the conditions also hold for the same $\delta$ for all links $(X_\tau, \pi_\tau)$ (of faces of co-dimension at least $2$). Therefore, this theorem implies that $X$ is  $(\frac{c(1-\delta)}{\delta},  \dots,\frac{c(1-\delta)}{ \dimension \delta})$-local spectral expander for $c=\frac{2(1+ \frac{\delta^2}{10})}{(1+\delta)}$. One can prove tighter bounds if they apply this theorem to any link $(X_\tau, \pi_\tau)$ individually and possibly use better bounds on $\Delta_{(X_{\tau},\pi_\tau)}(i)$. 
    \end{remark}
\begin{proof}[Proof of \cref{thm:Deltasameeps}]
Fix a face $\tau$ of co-dimension $k$. For brevity we abuse notation and write $\Delta_\tau$ denote $\Delta_{(X_\tau,\pi_\tau)}$. If $k\geq \Delta$ the statement follows from the above remark. In particular,  for any $i,j \in \prtin$  
\begin{align*}&\eps_{\{i,j\}}\cdot H_{\Delta_\tau-1}\leq \gamma_2\cdot H_{\Delta-1}\leq \gamma_2\cdot (1+\ln\Delta)\leq \frac{\delta^2}{10}, \\
&\sum_{\ell=1}^{\Delta_\tau(i)} \eps_{\{i,j_\ell\}}\cdot H_{\Delta_\tau(i)-1}(\ell-1)\leq \gamma_2 (\Delta+\ln\Delta)\leq 1-\delta.
\end{align*}
So, we can apply \cref{thm:main_technical}.

Otherwise, to bound the spectral expansion of $(X_{\tau} ,\pi_\tau)$,  let $\delta_k=1-(1-\delta)\frac{k+\ln k}{\Delta+\ln \Delta}\geq \delta$. 
For $i,j\in \prtin$
\begin{align*} &\eps_{\{i,j\}}\cdot H_{\Delta_\tau-1}\leq \gamma_2\cdot H_{k-1} \leq \frac{\delta^2\cdot H_{k-1}}{10(1+\ln\Delta)}\underset{\delta\leq\delta_k}{\leq} \frac{\delta_k^2}{10},\\ 
&\sum_{\ell=1}^{\Delta_\tau(i)}\eps_{\{i,j_\ell\}}\cdot H_{\Delta_\tau(i)-1}\underset{\eps_{i,j_\ell}\leq\gamma_2,\Delta_\tau(i)\leq k}{\leq} \gamma_2(k+\ln k)\leq \frac{(1-\delta) (k+\ln k)}{\Delta+\ln\Delta}=1-\delta_k.
\end{align*}
Therefore, applying  \cref{thm:main_technical} to $(X_\tau,\pi_\tau)$, we obtain that $(X_\tau,\pi_\tau)$ is a $\frac{c(1-\delta_k)}{k\delta}$-expander.
\end{proof}

\paragraph{Applications to Graph Coloring.} Consider a graph $G= ([n], E)$ with degree function $\Delta: [n] \rightarrow \mathbb{Z}_{\geq 0}$ and maximum degree $\Delta$, paired with a collection of color lists $\{L(i)\}_{i \in [n]}$ satisfying $L(i) \geq \Delta(i) + (1+\eta) \Delta $ for all $i \in [n]$ and for some $0<\eta\leq 0.9$ such that  $\frac{1+\ln \Delta}{\Delta} \leq \frac{  \eta^2, }{40}$. We  define the $(n+1)$-partite coloring complex $X(G, L)$ specified by the following facets: $\{i, \sigma (i)\}_{i \in [n]}$ is a facet if and only if $\sigma$ is a proper $L$-coloring of $G$, i.e. $\sigma (i) \in L(i)$ for each $i \in [n]$ and  $\sigma (i) \neq \sigma (j)$ if $ \{i, j\} \in E$. It is not hard to see that if $\{i, j\} \notin E$, then $\epsilon_{\{i, j\}} =0$. Moreover, if $\{i, j\} \in E$, then $\epsilon_{\{i, j\}} \leq \frac{1}{(1+\eta)\Delta}+ \frac{1}{(1+\eta)^2\Delta^2}$ (see Theorem 4.4 in \cite{ALO21}). Once can verify that if we apply the above theorem  to the coloring complex $X(G, L)$ with $\delta = \frac{\eta}{2}$, we get that  $X(G, L)$ is a $\left(\frac{4}{\eta},\frac{4}{2\eta },  \dots, \frac{4}{ (|V|-1 )\cdot \eta}\right)$-local spectral expander, and thus the Glauber dynamics for sampling a random  proper coloring mixes in polynomial time. This retrieves (up to constants) a theorem proved in \cite{ALO21}. 


\paragraph{Applications to Sparse High Dimensional Expanders}

Kaufman and Oppenheim \cite{KO18} obtained a simple construction of sparse $(\dimension+1)$-partite complexes with $|X(0)|\geq p^s$  for any integer $s >\dimension$ and prime power $p$ such that  every  $x\in X(0)$ is in at most $p^{O(d^3)}$ many facets (hence the degree is independent of $s$). 
They argued that for any non-consecutive pair of parts $i,j\in [d]$, i.e., $j\neq i+1$ and $i\neq j+1$ (mod $d+1$), we have $\eps_{\{i,j\}}=0$ but $\eps_{\{i,i+1\}}\leq \frac{1}{\sqrt{p}}$ for any $i\in [d]$ ($i+1$ is taken modulo $d+1$). Consequently, $\Delta(i)=2$ for any $i\in [d]$. Then, using  \cref{thm:trickledown}, they show that the complex is a $(\frac{1}{\sqrt{p}-(d-2)}, \dots,\frac{1}{\sqrt{p}-d-2})$-local spectral expander for $p> (d-2)^2$.
Simply plugging in these values into the above theorem, for $\delta=1-\frac{2}{\sqrt{p}}$ and  $p\geq 193$ (independent of $d$) the assumptions of the theorem are satisfied. The resulting complex is $(\frac{2c}{\sqrt{p}\delta},\dots,\frac{2c}{d\sqrt{p}\delta})$-local spectral expander for $c\approx 1.15$.
In other words, not only does the Kaufman-Opennheim construction give a HDX for constant values of $p$ independent of $d$, but also its local spectral expansion improves inverse linearly with the co-dimension.

O'Donnell and Pratt \cite{OP22} constructed $(d+1)$-partite  (sparse) high-dimensional expanders, with unbounded dimension $d$, via root systems of simple Lie Algebras, namely families $A_d$ for $d\geq 1$, $B_d$ for $d\geq 2$, $C_d$ for $d\geq 3$ and $D_d$ for $d\geq 4$. For explicit descriptions of these root systems, see e.g. \cite[Sec. 3.6]{Car89}.
O'Donnell and Pratt \cite{OP22} showed that, similar to the Kaufman-Oppenheim construction, the resulting $d$-dimensional complex $X$  satisfies $|X(0)|\geq p^{\Theta(m)}$ whereas every vertex is only in $p^{\Theta(d^2)}$ many facets and for any $i,j\in [d]$, $\eps_{i,j}\leq \sqrt{2/p}$. Then, using \cref{thm:trickledown} they concluded that the complex is a $(\frac{1}{\sqrt{p/2}-d+1},\dots,\frac{1}{\sqrt{p/2}-d+1})$-local spectral expander.
Upon further inspection of the explicit set of roots, one can verify that $\Delta\leq 2$ for complexes based on $A_d,B_d,C_d$ root systems and $\Delta\leq 3$ for the $D_d$ root system. Plugging in these values in the above theorem and setting $\delta=1-2\sqrt{2/p}$ for $A_d,B_d,C_d$ complexes and $\delta=1-3.5\sqrt{2/p}$ for the $D_d$ complex, if $p\geq 376$ for $A_d,B_d,C_d$ complexes and  $p\geq 729$ for the $D_d$ complex, we get that these complexes are  $(\frac{c'}{\sqrt{p}\delta}, \dots,\frac{c'}{d\sqrt{p}\delta})$-local spectral expander for some constant $c'>1$.

The well known Ramanujan complexes, also known as LSV complexes, are generalizations of Ramanujan graphs that were introduced by Lubotsky, Samuels, and Vishne in \cite{LSV05a} and explicitly constructed in  \cite{LSV05b}. Any $d$-dimenssional LSV complex $X$ that is $q$-thick for some fixed prime power $q$ and $d\geq 2$ has a bounded degree (the number of facets that contain each $x \in X(0)$ only deponents on $q$ and $d$, and is constant in the size of the ground set $n$ which can be arbitrarily large) (e.g. see \cite{EK16}). Moreover,  the link of every proper face of type $S$ is a spherical building complex in which  $\Delta(i) = |\{j \neq i: \epsilon_{\{i,j\}}>0\}|$ is at most 2 for every $i \in \prtin \setminus S$. Furthermore, the worst expansion among links co-dimension 2 is  $\frac{c}{\sqrt{q}}$, for some constant $c$ independent of $q, d, n$.   So, there is a constant $q_0$ such that if $q \geq q_0$, \cref{thm:main_technical} implies that the link of any (proper) face  of $X$ of co-dimension $k$ is a $\frac{c'}{(k-1)\sqrt{q}}$-spectral expander  for some constant $c'>0$ independent of $q, d, n$. This  improves over the bound $\frac{C(d)}{\sqrt{q}}$  proved in \cite{EK16}, where $C(d) \geq 2^d (d+1)!$. 



\subsection{Proof Overview}
At a high-level, our proof builds on the matrix trickle down framework introduced in the work of the authors with Liu \cite{ALO21}.  The Oppenheim's trickle down theorem follows from an inductive argument that derives a bound on the second eigenvalue of the simple walk on 1-skeleton of each link $(X_\tau, \pi_\tau)$  using the largest second eigenvalue of the simple walk on the 1-skeleton of links $(X_{\tau'}, \pi_{\tau'})$ for all faces $\tau' \supset \tau$ of size $|\tau|+1$. The reason that one has to take the largest 2nd eigenvalue as opposed to the average in each inductive step is that the eigenspaces of these simple walks are very different. The matrix trickle down framework overcomes this issue by substituting the scalar bounds on the second eigenvalues with matrices that upper bound the transition probability matrices of the simple walks on the 1-skeletons of links. However, as opposed to Oppenhiem's trickle down theorem, the matrix trickle down framework cannot be applied in a black-box manner to bound the spectral expansion of the 1-skeletons of all links only by bounding the spectral expansion of the 1-skeletons of links of faces of co-dimension 2. The main result of this paper can be seen as applying the matrix trickle down framework with a carefully chosen set of upper-bound matrices to prove an improved trickle down theorem for partite complexes that can be applied in the same black-box fashion, just known an "average" second eigenvalue.  

Our technical contribution in this paper are twofold: First, we observe that for any two disjoint sets of parts $S,T\subseteq [d]$, if the links of all faces of co-dimension 2 whose types intersect with both $S,T$ are $0$-spectral expanders, then for any $\sigma\in X$ of type $S$ and $\tau$ of type $T$ we get
$$\P_{\eta\sim\pi}[\sigma\subset \eta | \tau\subset\eta]=\P_{\eta\sim\pi}[\sigma\subset \eta]\quad \text{ and } \quad \P_{\eta\sim\pi}[\tau\subset\eta | \sigma\subset\eta]=\P_{\eta\sim\pi}[\tau\subset\eta],$$
namely, the conditional distributions on these types are independent (see \cref{lem:main_product} for details). This observation significantly simplifies invoking the Matrix trickle down framework. Armed with this tool, we invoke the matrix trickle down theorem using a carefully chosen family of (diagonal) matrices as the matrix bounds. These matrices are recursively defined based on an "average" of the spectral expansions of the links of all faces of co-dimension 2, See the proof of \cref{thm:main_technical} for the construction of these matrices. 

\subsection{Acknowledgments}
The discussion that initiated this work took place at the DIMACS Workshop on Entropy and Maximization. We would like to thank the DIMACS center and the workhop organizers for making this happen.
In particular, we would like to thank Tali Kaufman  for raising the question of an improved trickle down theorem for sparse simplical complexes in that workshop. 
We also would like to thank Ryan O’Donnell and Kevin Pratt for helpful discussions on high dimensional expanders based on Chevalley groups. 

   \section{Preliminaries}
   
    For any integer $n \geq 0$,  let  $[n] \coloneqq \{0, \dots, n\}$.   When it is clear from context, we write $x$ to denote a singleton $\{x\}$. 
    Given a set $S$, we write $v \in \mathbb{R}^S$ and $M \in \mathbb{R}^{S \times S}$ to respectively  denote  a vector and a matrix indexed by $S$.  Given a  probability  distributions $\mu$ over a set $S$,   we  may view $\mu$ as a vector  in $ \mathbb{R}_{\geq 0}^S$. For a $n \times n$ matrix $M$ with eigenvalues $\lambda_1,\dots,\lambda_n$,  define $\rho(A) \coloneqq \max_{1\leq i\leq n}|\lambda_i|$. 
    \paragraph{Graphs}
    Given a graph $G = (V, E)$, for any $v \in V$, let $\Delta_{G}(v) $ be the degree of $v$ in $\graph$, and let  $\Delta_{G}$ be the maximum degree of $\graph$.  Moreover, given a subset $S \subseteq V$,  $G[S]$ denotes  the induced subgraph of $G$ on the set of vertices $S$. For any $ S \subseteq V$, define $\graph_{S} \coloneqq \graph[V \setminus S]$. For simplicity of notation, when $G$ is clear from context, we denote  $\Delta_{\graph}(v)$ by $\Delta (v)$ for any $v \in V$, and  for any $S \subseteq V$, we denote $\Delta_{\graph_S} (v)$  by $\Delta_S (v)$  for any $v \in V \setminus S$. Similarly, we denote the maximum degree of $G$ and $G_S$ by $\Delta$ and  $\Delta_{S}$ respectively. Moreover, when $G$ is clear from context, we write $u \sim v$ if  $u, v$ are  adjacent vertices in $G$ and  $u \sim_S v$ if $u, v \in V \setminus S$ and $u \sim v$. 
    
    We say that  a graph $G = (V, E)$ paired with a  weight function $w: E \rightarrow \mathbb{R}_{\geq 0}$ is $\epsilon$-expander if $\lambda_2 (P) \leq \epsilon$, where $P \in \mathbb{R}^{V \times V}$ is the transition probability matrix of the simple random walk on $(G, w)$  defined as   $P (x, y) = \frac{w(\{x, y\})}{\sum_{z} w(\{x, z\})}$ for any $x, y \in V$. For such a graph we write $d_w(x)=\sum_{y\sim x} w(\{x,y\})$ to denote the weighted degree of a vertex $x$ and $\vol(S)=\sum_{x\in S} d_w(x)$ to denote the volume of a set $S\subseteq V$.
    
  \subsection{Linear Algebra}
  \begin{lemma}[Cheeger's Inequality]\label{lem:cheeger}
  For any  graph $G=(V,E)$ with weights $w:E\to\R_{\geq 0}$ and any 	$S\subseteq V$,
  $$ \frac{w(E(S,\overline{S}))}{\min\{\vol(S),\vol(\overline{S})\}}\leq \sqrt{2(1-\lambda_2)}$$
  where $\lambda_2$ is the second largest eigenvalue of the simple random walk on $G$
  \end{lemma}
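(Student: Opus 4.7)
This is the classical Cheeger inequality: some cut $S$ achieves conductance $w(E(S,\bar S))/\min(\vol(S),\vol(\bar S)) \le \sqrt{2(1-\lambda_2)}$. My plan is the standard three-stage sweep-cut argument.

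\textbf{Variational setup and truncation.} Let $f$ be the $\lambda_2$-eigenvector of $P = D^{-1}W$, so $\sum_x d_w(x) f(x) = 0$ and, defining $\mathcal{E}(f) := \sum_{\{x,y\} \in E} w(xy)(f(x)-f(y))^2$ and $\|f\|_D^2 := \sum_x d_w(x) f(x)^2$, the Rayleigh quotient $\mathcal{E}(f)/\|f\|_D^2$ equals $1-\lambda_2$. I would then choose a weighted median $c$ so that both $\{f > c\}$ and $\{f < c\}$ have volume at most $\vol(V)/2$, and decompose $f - c = f_+ - f_-$ with $f_+, f_- \ge 0$ on disjoint supports. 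Three elementary observations---(i) $\|f - c\|_D^2 \ge \|f\|_D^2$ since $\sum_x d_w(x) f(x) = 0$; (ii) $\|f-c\|_D^2 = \|f_+\|_D^2 + \|f_-\|_D^2$ by disjointness; (iii) the pointwise inequality $(f(x)-f(y))^2 \ge (f_+(x)-f_+(y))^2 + (f_-(x)-f_-(y))^2$---combine to give $\min\bigl(\mathcal{E}(f_+)/\|f_+\|_D^2,\ \mathcal{E}(f_-)/\|f_-\|_D^2\bigr) \le 1-\lambda_2$. Without loss of generality let $g := f_+$ attain this minimum; then $g \ge 0$ and $\vol(\mathrm{supp}(g)) \le \vol(V)/2$.

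\textbf{Sweep-cut rounding via Cauchy--Schwarz.} Consider the level sets $S_t := \{v : g(v)^2 > t\}$ for $t \ge 0$; each satisfies $\vol(S_t) \le \vol(V)/2$. Two layer-cake identities hold immediately:
$$\int_0^\infty \vol(S_t)\,dt = \|g\|_D^2, \qquad \int_0^\infty w\bigl(E(S_t,\bar S_t)\bigr)\,dt = \sum_{\{x,y\} \in E} w(xy)\,|g(x)-g(y)|\,(g(x)+g(y)).$$
Applying Cauchy--Schwarz to the second integrand,
$$\sum_{\{x,y\}} w(xy)\,|g(x)-g(y)|\,(g(x)+g(y)) \le \sqrt{\mathcal{E}(g)}\cdot\sqrt{\sum_{\{x,y\}} w(xy)(g(x)+g(y))^2} \le \sqrt{2\,\mathcal{E}(g)\,\|g\|_D^2},$$
where the last step uses $(a+b)^2 \le 2(a^2+b^2)$ and $\sum_{\{x,y\}} w(xy)(g(x)^2+g(y)^2) = \|g\|_D^2$. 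Since $\min_t (\text{numerator}/\text{denominator}) \le (\text{integral of numerator})/(\text{integral of denominator})$, there exists $t^*$ with
$$\frac{w(E(S_{t^*},\bar S_{t^*}))}{\vol(S_{t^*})} \le \sqrt{\frac{2\,\mathcal{E}(g)}{\|g\|_D^2}} \le \sqrt{2(1-\lambda_2)},$$
and since $\vol(S_{t^*}) \le \vol(V)/2$, the denominator equals $\min(\vol(S_{t^*}),\vol(\bar S_{t^*}))$, yielding the claim.

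\textbf{Main obstacle.} The conceptually delicate step is the truncation: $f_+$ is not orthogonal to $\mathbf{1}$ under the $d_w$-weighted inner product and so cannot be plugged into the Rayleigh-quotient formula directly. The median-shift combined with observations (i)--(iii) is exactly what recovers a valid Rayleigh bound for $g = f_+$ (or $f_-$). Once $g$ is in hand (nonnegative, supported on a small-volume set, with Rayleigh quotient $\le 1-\lambda_2$), the sweep-cut step proceeds mechanically from the two layer-cake identities and one use of Cauchy--Schwarz.
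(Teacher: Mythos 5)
Your argument is a correct and complete proof of the classical \emph{hard} direction of Cheeger's inequality: the weighted median shift, the splitting $f-c=f_+-f_-$ with observations (i)--(iii), and the sweep cut via the two layer-cake identities plus one Cauchy--Schwarz all check out, and together they produce \emph{some} set $S$ with $w(E(S,\overline S))/\min\{\vol(S),\vol(\overline S)\}\le\sqrt{2(1-\lambda_2)}$. The paper gives no proof of this lemma (it is quoted as a classical preliminary), so there is no in-paper argument to compare against.

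The issue is the quantifier, which you silently corrected in your opening sentence but which changes what is being proved. The lemma as printed asserts the bound \emph{for every} $S\subseteq V$; that statement is false (in the paper's own barbell example $\lambda_2\ge 1-2^{-\Omega(d)}$, so the right-hand side is $2^{-\Omega(d)}$, yet a single vertex inside one of the cliques has conductance $1$), and your proof correctly delivers only the existential version. More importantly, the direction the paper actually invokes is the \emph{easy} direction: from a set $S$ of conductance roughly $2^{-d/2}$ it concludes $\lambda_2\ge 1-2^{-\Omega(d)}$, which requires $\frac{1-\lambda_2}{2}\le \frac{w(E(S,\overline S))}{\min\{\vol(S),\vol(\overline S)\}}$ for every $S$. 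That inequality does not come out of the sweep-cut machinery; it follows from a one-line test-vector computation (plug the $\vol$-recentred indicator of $S$ into the Rayleigh quotient for $1-\lambda_2$ and use $\vol(S)\vol(\overline S)/\vol(V)\ge\tfrac12\min\{\vol(S),\vol(\overline S)\}$). So while your proof is internally sound, it establishes neither the lemma as literally written (nothing can, since it is false) nor the inequality the paper actually uses downstream; the lemma should be restated as the easy direction and proved by the test-vector argument.
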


  \begin{fact}[Expander Mixing Lemma]\label{fact:expmixinglemma}
Given a (weighted) graph $G=(V,E,w)$, for any set $S\subseteq V$,
$$ \left|w(E(S)) - \frac{\vol(S)^2}{\vol(V)}\right|\leq \lambda_2 \vol(S),$$
where $\lambda_2$ is the second largest eigenvalue of the simple random walk on $G$.
\end{fact}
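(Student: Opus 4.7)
The plan is to prove this via the standard spectral decomposition argument, translated into the language of the weighted setting. Let $W \in \mathbb{R}^{V \times V}$ be the weighted adjacency matrix with $W(x,y) = w(\{x,y\})$, and let $D$ be the diagonal matrix with $D(x,x) = d_w(x)$, so that the simple random walk matrix is $P = D^{-1} W$. The key observation is that $P$ is similar to the symmetric matrix $M = D^{-1/2} W D^{-1/2}$, so $M$ and $P$ share the same eigenvalues; in particular, $M$ has top eigenvalue $1$ with unit eigenvector $v_1 = D^{1/2}\vec{1}/\sqrt{\vol(V)}$, and all other eigenvalues of $M$ are at most $\lambda_2$ in magnitude (this is the one spot where I would want to be careful about the sign of the smallest eigenvalue; one can either assume non-bipartiteness or lazify the walk, or just state the bound in terms of the spectral radius away from $1$, which is standard).

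Next, I would rewrite the quantity of interest as a quadratic form. Let $u = D^{1/2} \vec{1}_S$, so that $\|u\|^2 = \vol(S)$ and
\begin{equation*}
    \vec{1}_S^\top W \vec{1}_S \;=\; u^\top M u.
\end{equation*}
The inner product of $u$ with $v_1$ is $\langle u, v_1\rangle = \sum_{x\in S} d_w(x)/\sqrt{\vol(V)} = \vol(S)/\sqrt{\vol(V)}$. Writing $u = \alpha v_1 + u^\perp$ with $\alpha = \vol(S)/\sqrt{\vol(V)}$ and $u^\perp \perp v_1$, we get $\|u^\perp\|^2 = \vol(S) - \vol(S)^2/\vol(V)$ and
\begin{equation*}
    u^\top M u \;=\; \alpha^2 \cdot 1 + (u^\perp)^\top M u^\perp \;=\; \frac{\vol(S)^2}{\vol(V)} + (u^\perp)^\top M u^\perp.
\end{equation*}
Since $u^\perp$ lies in the orthogonal complement of $v_1$, the Rayleigh quotient bound gives $|(u^\perp)^\top M u^\perp| \leq \lambda_2 \|u^\perp\|^2 \leq \lambda_2 \vol(S)$. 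Rearranging yields the desired inequality (with the convention that $\vec{1}_S^\top W \vec{1}_S$ equals $w(E(S))$ under the paper's counting convention, or twice it otherwise, in which case the same argument gives the stated bound up to an absorbed factor).

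The step I expect to require the most care is the spectral bound $|(u^\perp)^\top M u^\perp| \leq \lambda_2 \|u^\perp\|^2$: strictly speaking this requires all eigenvalues of $M$ other than the top one to be bounded in absolute value by $\lambda_2$, whereas the statement only controls the second largest eigenvalue. This is standard to finesse (either by assuming a non-bipartite / lazy convention, which is implicit in much of the HDX literature, or by observing that in the applications here the walks are reversible with non-negative spectrum on their orthogonal complement), but it is the one place I would flag in the writeup. Everything else is a direct unwinding of definitions.
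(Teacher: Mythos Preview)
The paper does not prove this statement at all: it is stated as a ``Fact'' and left without proof, as is standard for the expander mixing lemma. Your argument is the usual spectral-decomposition proof and is correct; the caveat you flag about $\lambda_2$ versus the full spectral radius on $v_1^\perp$ is a genuine technicality in the one-sided formulation, but the statement is only used once in the paper (in \cref{ex:hardcore}) where the walk in question has nonnegative spectrum on the orthogonal complement, so the one-sided bound suffices there.
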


   \subsection{Simplicial Complexes} 
 We say that a simplicial complex $X$ is {\em gallery connected} if for any face $\tau$ of co-dimension at least 2 and any pair of facets $\sigma, \sigma'$ of $X_\tau$ there is a sequence of facets of $X_\tau$, $\sigma=\sigma_0,\sigma_1,\dots,\sigma_\ell=\sigma'$, such that for all $0\leq i<\ell$, $|\sigma_i\Delta \sigma_{i+1}|=2$. 
It is shown in \cite[Prop 3.6]{Opp18} that if $X$ is totally connected, then it is gallery connected. 
    \begin{lemma} \label{lem:connectivity}
        Consider  a totally connected $(\dimension+1)$-partite complex $X$ with parts indexed by $\prtin$. For any $S \subseteq \prtin$, The induced subgraph of the 1-skeleton of $X$ on vertices of type $S$ is connected. 
    \end{lemma}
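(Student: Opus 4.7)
The plan is to exploit the gallery connectivity of $X$ --- which follows from total connectivity by \cite[Prop.~3.6]{Opp18} applied to $\tau=\emptyset$ --- together with the observation that restricting any facet to the parts indexed by $S$ produces a clique in the 1-skeleton. Write $V_S := \bigcup_{\ti \in S} \prt_\ti$ and let $H$ denote the induced subgraph of the 1-skeleton of $X$ on $V_S$. For each facet $\sigma \in X(\dimension)$, set $\sigma|_S := \sigma \cap V_S$; since $X$ is $(\dimension+1)$-partite, $\sigma|_S$ contains exactly one vertex from each part $\prt_\ti$ with $\ti \in S$ and therefore has cardinality $|S|$. Every pair of vertices in $\sigma|_S$ is joined by an edge of $X$ (they coexist in the facet $\sigma$), so $\sigma|_S$ spans a clique of $H$, which is in particular connected.

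Given $u,v \in V_S$, I will pick facets $\sigma \ni u$ and $\sigma' \ni v$ and invoke gallery connectivity to obtain a chain $\sigma = \sigma_0,\sigma_1,\dots,\sigma_\ell=\sigma'$ of facets with $|\sigma_k \Delta \sigma_{k+1}|=2$ for every $k$. Because $X$ is partite, the two vertices of each symmetric difference must belong to a common part $j_k \in \prtin$. If $j_k \notin S$ then $\sigma_k|_S = \sigma_{k+1}|_S$, while if $j_k \in S$ then $\sigma_k|_S$ and $\sigma_{k+1}|_S$ agree in exactly $|S|-1$ vertices. Assuming $|S|\geq 2$, in either case the cliques $\sigma_k|_S$ and $\sigma_{k+1}|_S$ share a vertex, so they lie in the same connected component of $H$; chaining along the entire gallery places $u \in \sigma_0|_S$ and $v \in \sigma_\ell|_S$ in the same component, proving $H$ is connected.

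The main --- and really the only nontrivial --- ingredient is the existence of the gallery, which I will import from \cite[Prop.~3.6]{Opp18}; the remaining step is purely combinatorial bookkeeping on which part gets swapped. The boundary case $|S|\leq 1$ is degenerate (for $|S|=1$, $H$ has no edges at all), and in applications within the paper one always has $|S|\geq 2$; otherwise the statement either holds vacuously or reduces to the trivial observation about a single part. No iteration, matrix machinery, or appeal to spectral expansion is needed --- the lemma is a purely topological/combinatorial consequence of partiteness plus gallery connectivity.
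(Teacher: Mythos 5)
Your proof is correct and takes essentially the same route as the paper: both pass a chain of facets between the two target vertices (obtained from total/gallery connectivity) and exploit that each facet meets every part in exactly one vertex, so its restriction to the types in $S$ gives edges/cliques of the induced subgraph, with consecutive facets in the chain sharing a vertex of a type in $S$ so the pieces link up. If anything, your explicit use of the $|\sigma_k \,\Delta\, \sigma_{k+1}|=2$ gallery property (which forces consecutive $S$-restrictions to agree in at least $|S|-1\geq 1$ vertices) is slightly more careful than the paper's proof, which only invokes nonempty consecutive intersections and picks one pair of type $\{\ti,\tj\}$ per facet.
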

    \begin{proof}
        Take $x, y$ of  type $\ti,\tj \in S$ and facets $\eta, \eta'$ such that $x \in \eta$, $y \in \eta'$. Total connectivity implies that there is a  sequence $\eta = \eta_1, \dots, \eta_t  = \eta'$ such that $\eta_i \cap \eta_{i+1} \neq \emptyset$ for all $1\leq i \leq t-1$. Let $\sigma_1 \subseteq \eta_1, \dots, \sigma_t \subseteq \eta_t$ be faces of type $\{\ti, \tj\}$. Then $\sigma_1, \dots, \sigma_t$ gives a path between $x, y$. 
   \end{proof}

 Given a (weighted) complex $(X,\pi)$,
for integer $-1\leq i \leq \dim(X)-1$, $\pi$ induces a distribution  $\pi_{i}$ on $X(i)$, 
    \begin{align*}
        \pi_i(\sigma) = \frac{1}{{\dim(X)+1  \choose i+1}} \Pr_{\tau \sim \pi}[  \sigma  \subset \tau] \quad \forall \sigma \in X (i). 
    \end{align*}
    Let   $ P_{(X, \pi), \tau} \in  \mathbb{R}^{X(0) \times X(0)}$  denote the transition probability  matrix of the simple random walk  on the 1-skeleton of $(X_\tau, \pi_\tau)$ padded with zeros outside the $X_\tau(0) \times X_\tau(0)$ block, i.e.    $P_{(X, \pi), \tau} (x, y) = \frac{\P_{\sigma\sim\pi_\tau}[\{x, y\}\subset\sigma]}{\sum_{z \in x_\tau (0) } \P_{\sigma\sim\pi_\tau}[\{x, z\}\subset \sigma]}$ for  $x, y \in  X_\tau(0)$,  and  $P_\tau (x, y) = 0$ otherwise. When the  weighted complex $(X,\pi)$ is clear from context, we write $P_{\tau}$ to denote $P_{(X, \pi), \tau}$. For any $\tau$ of co-dimension at least 2, we define the diagonal matrix $\Pi_{(X, \pi), \tau} \in \mathbb{R}^{X(0) \times X(0)}$ as follows:   $\Pi_{(X, \pi), \tau} (x, x ) = \pi_{\tau, 0} (x)$  for $x \in X_\tau (0)$, and $\Pi_{(X, \pi), \tau} (x, x ) =0$ otherwise. When $(X,\pi)$ is clear from context, we write $\Pi_{\tau}$ to denote $\Pi_{(X, \pi), \tau}$. Note that $\Pi_\tau P_\tau$ is a symmetric matrix. 
    
   Given a  $(\dimension+1)$-partite  complex,
   we say that an $x \in X(0)$  is of type $i$ and write  $\type (x) = i$ if $x \in T_i$. Similarly, the type of a face $\tau \in X$ is defined as $\type (\tau) = \{\ti \in \prtin : |\tau \cap \prt_\ti|=1\}$. The following facts hold for weighted partite complexes.

       \begin{fact} \label{fact:partite_normalize}
        Consider  a weighted $(\dimension+1)$ partite complex $(X, \pi)$ and a face $\tau$ of co-dimension $k \geq 1$. We have $k  \pi_{\tau, 0} (x)  =  \Pr_{\sigma \sim \pi_\tau}[x \in \sigma ]$ for all $x \in X_\tau (0)$.
    \end{fact}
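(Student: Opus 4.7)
The statement is essentially an immediate consequence of unfolding the definitions of the link $X_\tau$, the induced distribution $\pi_\tau$, and the vertex marginal $\pi_{\tau,0}$, so my plan is just to match these up in the right order.

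First, I would observe that because $X$ is assumed pure $d$-dimensional and $\tau$ has co-dimension $k$, the link $X_\tau$ is a pure $(k-1)$-dimensional complex; equivalently, every facet $\sigma \in X_\tau$ has exactly $k$ distinct vertices. The partite hypothesis enters here only in a very mild way: the $k$ vertices of each facet of $X_\tau$ lie in the $k$ distinct parts $\prt_i$ with $i \notin \type(\tau)$, which makes "$k$ vertices per facet" transparent, but pureness alone would suffice.

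Next, I would apply the defining formula for $\pi_i$ from the Preliminaries (displayed just before the fact), with the ambient complex taken to be $X_\tau$, the distribution on facets taken to be $\pi_\tau$, and the dimension index set to $i = 0$. Since $\dim(X_\tau) + 1 = k$ and $\binom{k}{1} = k$, this yields directly
$$\pi_{\tau,0}(x) \;=\; \frac{1}{\binom{\dim(X_\tau)+1}{0+1}}\,\Pr_{\sigma \sim \pi_\tau}[\{x\} \subset \sigma] \;=\; \frac{1}{k}\,\Pr_{\sigma \sim \pi_\tau}[x \in \sigma]$$
for every $x \in X_\tau(0)$, and rearranging gives the claimed identity $k\,\pi_{\tau,0}(x) = \Pr_{\sigma \sim \pi_\tau}[x \in \sigma]$.

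There is no real obstacle; the only content is identifying the binomial coefficient in the definition of $\pi_i$ with the size $k$ of a facet of $X_\tau$. As a built-in sanity check, summing the claimed identity over $x \in X_\tau(0)$ gives
$$\sum_{x \in X_\tau(0)} k\,\pi_{\tau,0}(x) \;=\; \sum_{x \in X_\tau(0)} \Pr_{\sigma \sim \pi_\tau}[x \in \sigma] \;=\; \mathbb{E}_{\sigma \sim \pi_\tau}[|\sigma|] \;=\; k,$$
using again that every facet of $X_\tau$ has exactly $k$ vertices; this is consistent with $\pi_{\tau,0}$ being a probability distribution on $X_\tau(0)$, and can serve as a short alternative derivation of the normalizing factor $1/k$.
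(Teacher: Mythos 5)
Your derivation is correct and is exactly the intended one: the paper states this as a Fact without proof, and the content is precisely the unfolding of the definition of the induced distribution $\pi_{i}$ applied to the $(k-1)$-dimensional link $(X_\tau,\pi_\tau)$ with $i=0$, where $\binom{\dim(X_\tau)+1}{1}=k$. Your sanity check via $\sum_x \Pr_{\sigma\sim\pi_\tau}[x\in\sigma]=\mathbb{E}_{\sigma\sim\pi_\tau}[|\sigma|]=k$ is a nice confirmation but adds nothing beyond the definitional computation.
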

    
    \begin{fact} \label{fact:partite_sum_to_1}
        Consider  a weighted $(\dimension+1)$ partite complex $(X, \pi)$ with parts  indexed by $\prtin$ and a face $\tau$ of co-dimension $k \geq 1$.  For any $\ti \in \prtin$,  $ \sum_{x:  \type(x) =i} \Pr_{\sigma \sim \pi_\tau}[x \in \sigma ]= 1$. 
    \end{fact}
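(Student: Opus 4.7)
The plan is a straightforward unpacking of the partite structure combined with a switch of summation and expectation. First, I would rewrite
\[
\sum_{x:\,\type(x) = i} \Pr_{\sigma \sim \pi_\tau}[x \in \sigma] \;=\; \mathbb{E}_{\sigma \sim \pi_\tau}\bigl[\,|\{x \in \sigma : \type(x) = i\}|\,\bigr]
\]
by linearity of expectation (each summand is $\mathbb{E}_{\sigma \sim \pi_\tau}[\mathbf{1}[x \in \sigma]]$). Thus it suffices to show that every facet $\sigma$ in the support of $\pi_\tau$ contains exactly one vertex of type $i$, after which the right-hand side equals $1$.

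To verify that claim, I would use the definition of the link: every facet of $X_\tau$ has the form $\sigma = \eta \setminus \tau$ for some facet $\eta \in X(d)$ with $\eta \supset \tau$. Since $X$ is $(\dimension+1)$-partite, the facet $\eta$ contains exactly one vertex from each part $T_j$, $j \in \prtin$. In particular, $\eta$ contains exactly one vertex $x$ of type $i$; because $i \notin \type(\tau)$, that vertex lies in $\eta \setminus \tau = \sigma$. Substituting $|\{x \in \sigma : \type(x) = i\}| = 1$ into the displayed equation yields the stated identity.

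There is no real technical obstacle here; the argument is a direct bookkeeping consequence of partiteness. The only minor subtlety is the boundary convention when $i \in \type(\tau)$: in that case $X_\tau(0)$ contains no vertex of type $i$ at all, so the sum on the left-hand side is empty and equals zero rather than one. The statement should therefore be read as applying to $i \in \prtin \setminus \type(\tau)$, which is the only case invoked in later arguments (for example through \cref{fact:partite_normalize}).
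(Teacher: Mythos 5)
Your argument is correct and is precisely the bookkeeping the paper leaves implicit (it states this as a Fact without proof): linearity of expectation reduces the claim to each facet of $X_\tau$ containing exactly one vertex of type $i$, which follows from partiteness. Your caveat about $i \in \type(\tau)$ is also right --- the sum is empty (hence $0$) in that case, so the statement should indeed be read for $i \in \prtin \setminus \type(\tau)$, which is how it is used.
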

The following definition is useful for proving the main theorem. 
    \begin{definition}
        For any $(\dimension+1)$-partite complex $(X, \pi)$ with parts  indexed by $\prtin$, define a graph $\graph_{(X, \pi)}$ on the set of vertices $\prtin$, where any distinct $\ti, \tj \in \prtin$ are   adjacent in $\graph_{(X, \pi)}$ if there exists $\tau$ of type $\prtin \setminus \{\ti, \tj\}$  such that the second eigenvalue of $(X_\tau, \pi_\tau)$ is positive.
    \end{definition}
    \begin{remark}
        For any $(\dimension+1)$-partite complex $(X, \pi)$ with parts  indexed by $\prtin$, for every $i \in \prtin$, $\Delta(i)$ (see \cref{def:Delta(i)}) is the degree of $i$ in graph $G_{(X, \pi)}$ and $\Delta$ is the maximum degree of $G_{(X, \pi)}$. 
    \end{remark}
     Note that if $\codim(\tau) =k$, the link $X_\tau$ is a $k$-partite complex with parts indexed by $\prtin \setminus S$.  One  can verify that given a face $\tau$ of type $S$,  the set of edges of $G_{(X_\tau, \pi_\tau)}$ is  a subset of the edges of $(G_{(X, \pi)})_S$, i.e., the induced subgraph of $G_{(X, \pi)}$ on $\prtin \setminus S$. When $(X, \pi)$ is clear from context, we write $G$ for $G_{(X, \pi)}$ and $G_S$ for $(G_{(X, \pi)})_S$.

    \paragraph{Product of Weighted Complexes }
    Given  weighted complexes  $(Y_1,  \mu_{1}),   \dots,  (Y_\ell,  \mu_{\ell})$   defined on disjoint ground sets and  of dimensions $\dimension_1, \dots, \dimension_\ell$ respectively, and a  weighted  complexes $(X, \pi)$ of  dimension $\dimension$, 
    we write $(X, \pi) = (Y_1, \mu_1) \times \dots \times (Y_\ell,  \mu_\ell)$ if  $X (\dimension) = \{\cup_{i \in [\ell]} \tau_i:   \tau_1 \in Y_1  (\dimension_1), \dots, \tau_\ell \in Y_\ell (\dimension_\ell) \}$  and $\pi(\cup_{i \in [\ell]} \tau_i) = \prod_{ i \in [\ell]}  \mu_i (\tau_i) $ for all $\tau_1 \in Y_1  (\dimension_1), \dots, \tau_\ell \in Y_\ell (\dimension_\ell)$. We denote the generating polynomial of $(X, \pi)$ by    $g_{(X, \pi)}$, i.e.   $g_{(X, \pi)} \coloneqq \sum_{\tau \in X(\dimension) }  \pi (\tau) \prod_{x \in \tau} z_x$. One can verify that 
     $(X, \pi) = (X_1, \mu_1) \times \dots \times (X_\ell,  \mu_\ell)$ if and only if  $g_{(X, \pi)} = g_{ (X_1, \mu_1)} \times \dots \times g_{(X_\ell,  \mu_{\ell})}$. Note that this is true because  we assume that for any weighted simplicial complex, the given distribution on facets is non-zero on all facets. 

    \paragraph{Matrix Trickle Down Theorem }
    We use the following theorem which is the main technical theorem in \cite{ALO21}. 
    
    \begin{theorem}[{\cite[Thm III.5]{ALO21}}]\label{thm:matrix-trickle-down}
        Let $(X, \pi)$ be a totally connected weighted complex. Suppose $\{M_{\tau} \in \mathbb{R}^{X(0) \times X(0)}\}_{\tau \in X(\leq \dimension-2)}$ is a family of symmetric matrices satisfying the following:
        \begin{enumerate}
        \item \textbf{Base Case:} For every $\tau$ of co-dimension 2, we have the spectral inequality $$\Pi_{\tau}P_{\tau} - 2\pi_{\tau, 0}\pi_{\tau, 0}^{\top} \preceq M_{\tau} \preceq \frac{1}{5} \Pi_{\tau}.$$
        \item \textbf{Recursive Condition:} For every $\tau$ of co-dimension at least $k \geq 3$, at least one of the following holds:  $ M_{\tau}$ satisfies 
        \begin{align}    \label{eq:desired_ineq}
            \boundMatrix_\tau \preceq  \frac{k-1}{3k -1} \Pi_\tau \quad \text{and}  \quad
            \E_{x \sim \pi_{\tau}} M_{\tau \cup \{x\}} \preceq M_{\tau} - \frac{k-1}{k-2} M_{\tau}\Pi_{\tau}^{-1}M_{\tau}.
        \end{align}
        Or,  $(X_\tau,\pi_{\tau, k-1})$ is a product of weighted simplicial complexes $(Y_{1},\mu_{1}),\dots,(Y_{t},\mu_{t})$ and  for every $\eta \in X_{\tau}(k-1)$,
        \begin{align*}
            M_{\tau} = \bigoplus_{1 \leq i \leq t : \dimension_{Y_i} \geq 1} \frac{\dimension_{Y_i}(\dimension_{Y_{i}}+1) }{k(k-1)} M_{\tau \cup \eta_{-i}},
        \end{align*}
        where $\eta_{-i} = \eta \setminus Y_{i}(0)$.
        \end{enumerate}
        Then for every $\tau \in X(\leq \dimension-2)$, we have the bound 
        $\lambda_2(\Pi_\tau P_\tau)\leq \rho(\Pi_\tau^{-1}M_\tau)$.
    \end{theorem}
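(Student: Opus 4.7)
The plan is to apply the Matrix Trickle Down Theorem (\cref{thm:matrix-trickle-down}) to a carefully chosen family of \emph{diagonal} matrices $\{M_\tau\}_{\tau \in X(\leq d-2)}$. Because that theorem offers a product-decomposition clause as an alternative to the spectral recursion, the first step is to use \cref{lem:main_product} advertised in the proof overview: whenever the type graph $G_{(X_\tau,\pi_\tau)}$ is disconnected, the weighted link $(X_\tau,\pi_\tau)$ factors as a product of sub-complexes indexed by its connected components. In that situation I set $M_\tau$ equal to the direct sum prescribed by clause~2 of the theorem, so that the recursive spectral inequality needs only be verified when $G_\tau$ is connected.

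For a link with connected type graph, I would take $M_\tau$ diagonal of the form
\[
M_\tau(x,x) \;=\; f^{(\tau)}_{\type(x)}\cdot \pi_{\tau,0}(x),
\]
where the type-indexed coefficients $f^{(\tau)}_i$ are designed to mirror the harmonic structure of hypothesis~\eqref{eq:thmcond2}. Concretely, for each $i \in \prtin\setminus \type(\tau)$, let $j_1,\dots,j_{\Delta_\tau(i)}$ be the neighbours of $i$ in $G_\tau$ ordered so that $\epsilon_{\{i,j_1\}}\leq\cdots\leq\epsilon_{\{i,j_{\Delta_\tau(i)}\}}$, and let $f^{(\tau)}_i$ be a weighted sum $\sum_\ell \epsilon_{\{i,j_\ell\}}\cdot w^{(\tau)}_\ell$ whose weights reduce to $H_{\Delta(i)-1}(\ell-1)$ at $\tau=\emptyset$. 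The harmonic shape is forced by the trickle-down recursion: the identity $H_n(\ell-1) = H_{n-1}(\ell-1) + \tfrac{1}{n}$ is exactly what will cause the expectation $\E_{x\sim\pi_\tau} M_{\tau\cup x}$ to telescope against the matrix-square term $M_\tau\Pi_\tau^{-1}M_\tau$ when one moves up one co-dimension.

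With $M_\tau$ in hand, three verifications remain. \textbf{(i) Base case} (co-dim $2$): the standard fact $\Pi_\tau P_\tau - 2\pi_{\tau,0}\pi_{\tau,0}^\top \preceq \lambda_2(P_\tau)\Pi_\tau = \epsilon_{\{i,j\}}\Pi_\tau$, combined with the bound $\epsilon_{\{i,j\}}\leq \delta^2/10 \leq 1/5$ from \eqref{eq:thmcond1}, yields both $\Pi_\tau P_\tau - 2\pi_{\tau,0}\pi_{\tau,0}^\top \preceq M_\tau$ and $M_\tau \preceq \frac{1}{5}\Pi_\tau$. \textbf{(ii) Upper bound} $M_\tau \preceq \frac{k-1}{3k-1}\Pi_\tau$: reduces to $f^{(\tau)}_i \leq \frac{k-1}{3k-1}$, which follows from \eqref{eq:thmcond2} when $k$ is sufficiently large and from \eqref{eq:thmcond1} when $k$ is small. \textbf{(iii) Spectral recursion} $\E_{x\sim\pi_\tau} M_{\tau\cup x}\preceq M_\tau - \frac{k-1}{k-2}M_\tau\Pi_\tau^{-1}M_\tau$: being diagonal, this becomes an entrywise scalar inequality on the $f^{(\tau)}_i$'s, verified by splitting the expectation over $\type(x)$, applying the partite-averaging identities (\cref{fact:partite_normalize} and \cref{fact:partite_sum_to_1}) to swap sums, and collapsing the harmonic terms.

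The principal obstacle will be step (iii). When a vertex $x$ of type $i$ is added to $\tau$, the neighbour ordering of every $j\neq i$ in $G_{\tau\cup x}$ is obtained from the old ordering by deleting $i$, so the harmonic weights $w^{(\tau\cup x)}_\ell$ shift in a position-dependent way relative to $w^{(\tau)}_\ell$. Controlling the cross-terms that result, and absorbing them into the negative quadratic correction $-\frac{k-1}{k-2}(f^{(\tau)}_j)^2$, will likely need a case split on where $i$ lay in the old ordering of the neighbours of $j$, the monotonicity of $\epsilon_{\{j,\cdot\}}$, and a convexity argument on $\ell\mapsto H_n(\ell)$; any slack accumulated through the induction accounts for the constant $c=\frac{2(1+\delta^2/10)}{1+\delta}$ in the final bound. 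Once (i)--(iii) are established, \cref{thm:matrix-trickle-down} applied at $\tau=\emptyset$ yields $\lambda_2(\Pi_\emptyset P_\emptyset)\leq \rho(\Pi_\emptyset^{-1}M_\emptyset) = \max_i f^{(\emptyset)}_i$, and the construction is calibrated so that this is at most $\frac{c(1-\delta)}{d\delta}$, as claimed.
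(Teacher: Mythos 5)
Your proposal does not prove the statement it was asked to prove. The statement is the Matrix Trickle Down Theorem itself (\cref{thm:matrix-trickle-down}): an assertion that \emph{any} family of symmetric matrices $\{M_\tau\}$ satisfying the base-case inequality $\Pi_\tau P_\tau - 2\pi_{\tau,0}\pi_{\tau,0}^\top \preceq M_\tau \preceq \tfrac15 \Pi_\tau$ and the recursive condition (or the product-decomposition alternative) yields $\lambda_2(\Pi_\tau P_\tau) \leq \rho(\Pi_\tau^{-1} M_\tau)$ for every $\tau$. Your first sentence is ``The plan is to apply the Matrix Trickle Down Theorem,'' and everything that follows is an application of that theorem to a particular family of diagonal matrices built from the $\epsilon_{\{i,j\}}$'s and harmonic weights. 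That is circular as a proof of the theorem, and it is in substance a sketch of a \emph{different} result --- essentially the paper's \cref{thm:main_technical} routed through \cref{prop:Dcondition} --- not of the statement at hand. (In the paper this theorem is imported as a black box from \cite[Thm III.5]{ALO21} and is not reproved; it is the tool, not the target.)

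A genuine proof would have to work at the level of arbitrary matrices $M_\tau$ satisfying the stated hypotheses, with no reference to partiteness, type graphs, $\epsilon_{\{i,j\}}$, or harmonic numbers, since none of these appear in the hypotheses of \cref{thm:matrix-trickle-down}. The standard route is a Garland-type localization: relate the walk on the $1$-skeleton of $X_\tau$ to the expectation $\E_{x\sim\pi_{\tau,0}}[\Pi_{\tau\cup x}P_{\tau\cup x}]$ over links of vertices, then run an induction on co-dimension showing that the invariant $\Pi_\tau P_\tau - c_k\,\pi_{\tau,0}\pi_{\tau,0}^\top \preceq M_\tau$ propagates upward, using the quadratic correction term $\tfrac{k-1}{k-2}M_\tau\Pi_\tau^{-1}M_\tau$ to absorb the error incurred in one localization step, and handling the product case by the fact that eigenvalues of a product complex's skeleton decompose over the factors. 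None of this machinery appears in your write-up, so the proposal has a gap that is not repairable by filling in details --- it addresses the wrong statement.
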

        
    \section{Simplifying  Matrix Trickle Down's Conditions to Scalar Inequalities}

   In this section, given a $(\dimension+1)$-partite complex $ (X, \pi)$, we apply the matrix trickle down theorem to derive a set of conditions on a family of vectors $\{\vect_S \in \mathbb{R}^{\prtin}\}_{S \subset \prtin, |S|< \dimension}$ that will guarantee that $\lambda_2(P_\tau) \leq \frac{\max_{\ti \in \prtin} \vect_S(\ti)}{k-1}$ for all $k\geq 2$ and $\tau$ of co-dimension $k$ and type $S$.  We prove the following theorem. 

    \begin{theorem}\label{prop:Dcondition}
        Consider a totally connected $(\dimension+1)$-partite complex $ (X, \pi)$ with parts indexed by $\prtin$ and graph $\graph \coloneqq \graph_{(X, \pi)}$. Suppose we are given a family of vectors   $\{\vect_S \in \mathbb{R}^{\prtin}\}_{S \subset \prtin, |S|< \dimension}$ such that for all   $S \subset \prtin$ of size $(\dimension +1)-k$, the support of $f_S$ is a subset of $\prtin \setminus S$, and the following holds: 
        \begin{itemize}
      
      
            \item           If $G_S$ is disconnected, then $\vect_S = \sum_{1 \leq i \leq \ell: |I_i| \geq 2}  \vect_{[d] \setminus I_i},$ where $I_1 \cup \dots \cup I_\ell $ are the vertices of the  connected components of $G_S$. Note that if all connected components are of size 1, then $f_S = 0$. 

            \item Otherwise if $G_S$ is connected,   we have $\max_{\ti \in \prtin} \vect_S (\ti) \leq \frac{(k-1)^2}{3k -1}$ and 
        \begin{enumerate}[label=(\roman*)]
            \item  Base Case: If  $k = 2 $,  then  for  every face $\tau$ of type $S$,  
            $\lambda_2(P_\tau) \leq   \max_{i \in \prtin \setminus S}\vect_{S} (\ti)$.  
            \label{item:cond1}
            \item  Recursive Condition: If $k \geq 3$, then 
                $$\sum_{\tj \in \prtin \setminus (S \cup \ti)}   \vect_{S \cup \tj} (\ti) \leq  (k-2) \vect_S (\ti)- \vect_S^2 (\ti), $$
               \label{item:cond2}
            for all $i \in \prtin \setminus S$. 
         \end{enumerate}
             \end{itemize}
         Then, for all $k\geq 2$ and $\tau$ of  co-dimension $k$ and type $S$, $\lambda_2(P_\tau) \leq \frac{\max_{\ti \in \prtin} \vect_S(\ti)}{k-1}$. 
         
    \end{theorem}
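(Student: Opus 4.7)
The plan is to invoke the matrix trickle down theorem (\cref{thm:matrix-trickle-down}) with a family of diagonal bound matrices built directly from $\{\vect_S\}$. For each $\tau$ of type $S$ and co-dimension $k\geq 2$ with $G_S$ connected, I take
\[
M_\tau \;:=\; \frac{1}{k-1}\,D^{\tau}_S\,\Pi_\tau, \qquad D^{\tau}_S(x,x) := \vect_S(\type(x))\ \text{for } x\in X_\tau(0),
\]
and zero elsewhere; for $G_S$ disconnected with components $I_1,\dots,I_\ell$, I take the direct sum prescribed by the product branch of \cref{thm:matrix-trickle-down}. The disconnected hypothesis $\vect_S=\sum_{|I_i|\geq 2}\vect_{\prtin\setminus I_i}$ guarantees this definition is consistent with the connected-case definition on each block. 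Since $M_\tau$ is diagonal, $\rho(\Pi_\tau^{-1}M_\tau)=\max_i \vect_S(i)/(k-1)$, which is exactly the bound we must conclude.

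The heart of the argument is the verification of the recursive inequality in \eqref{eq:desired_ineq}. Since both sides are diagonal, this reduces to a scalar inequality at each $y\in X_\tau(0)$. Fix $y$ of type $\ti$. Using reversibility $\pi_{\tau,0}(x)\pi_{\tau\cup x,0}(y)=\pi_{\tau,0}(y)\pi_{\tau\cup y,0}(x)$ together with $\sum_{x:\type(x)=\tj}\pi_{\tau\cup y,0}(x)=\tfrac{1}{k-1}$ (a direct combination of \cref{fact:partite_normalize,fact:partite_sum_to_1}), one finds
\[
\E_{x\sim\pi_{\tau,0}} M_{\tau\cup\{x\}}(y,y) \;=\; \frac{\pi_{\tau,0}(y)}{(k-1)(k-2)}\sum_{\tj\in\prtin\setminus(S\cup \ti)} \vect_{S\cup \tj}(\ti),
\]
while a direct expansion of the RHS gives $\bigl(M_\tau - \tfrac{k-1}{k-2}M_\tau\Pi_\tau^{-1}M_\tau\bigr)(y,y) = \frac{\pi_{\tau,0}(y)}{(k-1)(k-2)}\bigl((k-2)\vect_S(\ti)-\vect_S(\ti)^2\bigr)$. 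Thus the recursive matrix inequality is exactly hypothesis (ii), and the upper bound $M_\tau\preceq\tfrac{k-1}{3k-1}\Pi_\tau$ is immediate from the assumed $\max_i\vect_S(\ti)\leq(k-1)^2/(3k-1)$.

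For the base case at $k=2$ with $G_S$ connected, $X_\tau$ is a bipartite graph on $\{\ti,\tj\}=\prtin\setminus S$, and its simple walk has a $\pm$-paired spectrum. Hence $\Pi_\tau^{1/2}P_\tau\Pi_\tau^{-1/2}$ has explicit $\pm 1$-eigenvectors $\pi_\tau^{1/2}$ and $\pi_i^{1/2}-\pi_j^{1/2}$, and peeling off both one-dimensional eigenspaces bounds the residual operator norm by $\lambda_2(P_\tau)$. Conjugating back by $\Pi_\tau^{1/2}$ and rearranging yields
\[
\Pi_\tau P_\tau - 2\pi_{\tau,0}\pi_{\tau,0}^\top \preceq \lambda_2(P_\tau)\,\Pi_\tau - 2\bigl(\pi_i\pi_i^\top+\pi_j\pi_j^\top\bigr),
\]
from which $\Pi_\tau P_\tau-2\pi\pi^\top\preceq M_\tau$ follows by invoking (i) and using the rank-two correction $-2(\pi_i\pi_i^\top+\pi_j\pi_j^\top)$ to compensate the asymmetry between $\vect_S(\ti)\Pi_i$ and $\vect_S(\tj)\Pi_j$. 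The disconnected case is handled by the product clause of \cref{thm:matrix-trickle-down}: the independence lemma sketched in the proof overview implies $(X_\tau,\pi_\tau)$ factors along the components of $G_S$, and our direct-sum definition of $M_\tau$ matches the product formula because $\vect_S$ splits the same way.

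The most delicate step will be completing the base case. The scalar hypothesis (i) controls $\lambda_2(P_\tau)$ only by $\max_i\vect_S(\ti)$, so one cannot directly compare the diagonal $D_S^\tau\Pi_\tau$ against $\lambda_2(P_\tau)\Pi_\tau$ component-wise; the argument must exploit the sharper bipartite bound above, with the $-2(\pi_i\pi_i^\top+\pi_j\pi_j^\top)$ correction supplying precisely the slack needed to absorb an asymmetry in $\vect_S$ across the two free parts. Once the base case, recursive inequality, and product decomposition are all in place, \cref{thm:matrix-trickle-down} yields $\lambda_2(P_\tau)\leq \rho(\Pi_\tau^{-1}M_\tau) = \max_i \vect_S(\ti)/(k-1)$ for every $\tau$ of type $S$ and co-dimension $k\geq 2$, which is the desired conclusion.
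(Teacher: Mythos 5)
Your overall route is the paper's route: you plug the diagonal matrices $M_\tau=\frac{1}{k-1}\Pi_\tau D_S$ with $D_S(x,x)=\vect_S(\type(x))$ into \cref{thm:matrix-trickle-down}, you reduce the recursive inequality \eqref{eq:desired_ineq} to the scalar hypothesis \cref{item:cond2} by exactly the computation the paper does (your use of reversibility together with \cref{fact:partite_normalize} and \cref{fact:partite_sum_to_1} is correct), and you treat disconnected $G_S$ through the product clause via the factorization of $(X_\tau,\pi_\tau)$ along the components of $G_S$, which is \cref{lem:main_product} in the paper. The norm identity $\rho(\Pi_\tau^{-1}M_\tau)=\max_i\vect_S(i)/(k-1)$ and the bound $M_\tau\preceq\frac{k-1}{3k-1}\Pi_\tau$ are also handled as in the paper.

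The gap is in your base case. The bipartite inequality you derive, $\Pi_\tau P_\tau-2\pi_{\tau,0}\pi_{\tau,0}^\top\preceq\lambda_2(P_\tau)\Pi_\tau-2(\pi_i\pi_i^\top+\pi_j\pi_j^\top)$, is fine, but the final step--that this plus \cref{item:cond1} gives $\Pi_\tau P_\tau-2\pi_{\tau,0}\pi_{\tau,0}^\top\preceq\Pi_\tau D_S$ when $\vect_S(i)\neq\vect_S(j)$--does not hold: the rank-two correction lives on $\mathrm{span}\{\pi_i,\pi_j\}$, while the problematic test vectors are balanced within each part and are annihilated by it. Concretely, take parts $\{a,b\}$ and $\{c,d\}$ with edge weights $w(a,c)=w(b,d)=1+\epsilon$ and $w(a,d)=w(b,c)=1-\epsilon$, so $\lambda_2(P_\tau)=\epsilon$ and $\Pi_\tau=\frac14 I$. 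For $v=(\alpha,-\alpha,\beta,-\beta)$ one has $\pi_{\tau,0}^\top v=\pi_i^\top v=\pi_j^\top v=0$, $v^\top\Pi_\tau P_\tau v=\epsilon\alpha\beta$ and $v^\top\Pi_\tau D_S v=\tfrac12(\vect_S(i)\alpha^2+\vect_S(j)\beta^2)$, so the target inequality forces $\lambda_2(P_\tau)\leq\sqrt{\vect_S(i)\vect_S(j)}$; with $\vect_S(i)=\epsilon$, $\vect_S(j)=0$ hypothesis (i) (a bound by the maximum) is satisfied but the matrix inequality is false, so no argument can close this step from (i) as you read it. For comparison, the paper's own proof of this step simply asserts $\lambda_2(P_\tau)\leq D_S(x,x)$ for every $x\in X_\tau(0)$, i.e.\ it implicitly uses a per-type bound rather than the maximum; in the only place \cref{prop:Dcondition} is invoked (the proof of \cref{thm:main_technical}) one has $\vect_S(i)=\vect_S(j)=\epsilon_{\{i,j\}}$ when $|S|=d-1$, so min and max coincide there. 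The clean repair is to verify the base case under that per-type (or geometric-mean) reading, not via the bipartite correction, which supplies no slack in the relevant directions. A second, smaller point: for $k=2$ with $G_S$ disconnected you cannot appeal to the product branch of \cref{thm:matrix-trickle-down} (it is only offered for co-dimension at least $3$); you must check the base-case inequality directly, which is immediate since then $\vect_S=0$, $M_\tau=0$, and $\lambda_2(P_\tau)\leq 0$.
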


    The main sets of conditions in the above theorem are the inequalities in \cref{item:cond1} and \cref{item:cond2}. To get some intuition about these conditions, it is helpful to compare the above with the standard trickle down theorem (\cref{thm:trickledown}). There, one  shows that if $\lambda_2(P_{\tau\cup \{x\}})\leq \lambda$ for all  $x\in X_\tau (0)$, then $\lambda_2(P_\tau)\leq \alpha$, where satisfies \begin{equation}\label{eq:tricklerecursion}\lambda\leq \alpha-{\alpha}^2(1-\lambda).\end{equation} Then, \cref{thm:trickledown} follows by recursively applying this inequalities.
    
In the above theorem, instead of a single upper bound on $\lambda_2(P_\tau)$ for faces $\tau$ of co-dimension 2,   one bounds the expansion of the links of all faces of co-dimension 2 of each type separately, allowing higher degrees of freedom. 
For any face $\tau$ of type $S$ and co-dimension $k=|S|$, the function $\frac{f_S(.)}{k-1}$ will serve as the digonal entries of a matrix upper-bound $P_\tau$.
Then, the inequality $\frac{\sum_{\tj \in \prtin \setminus (S \cup \ti)}   \vect_{S \cup \tj} (\ti)}{k-2} \leq  \vect_S (\ti)- \frac{\vect_S^2 (\ti)}{k-2}$ 
is the natural analogue of \eqref{eq:tricklerecursion}
which requires $f_S$ to be at least  ``the average'' of $f_{S \cup j}$ for all $j \in \prtin \setminus S$ plus an square error term.     

   Before proving the above theorem, we show that if $G_S$ is disconnected with parts $G[I_1], \dots, G[I_\ell]$ for some $S\subset \prtin$ of size  at most $d-1$, then for any $\tau$ of type $S$, $(X_\tau,\pi_{\tau, k-1})$ can be written as a product of family of its links of  types  $\prtin \setminus I_i$ for all $1 \leq i \leq \ell$. This allows us to prove a better upper-bound on $\lambda_2(P_\tau)$ for such faces $\tau$ by simply "concatenating" upper-bounds on each connected component of $G_S$.

    \begin{lemma}\label{lem:pairwiseindependence}
        Consider a  2-partite complex $(X, \pi)$ with parts $S, T$. If $(X, \pi)$ is 0-expander, then $(X, \pi) = (X_{z}, \pi_{z}) \times  (X_{y}, \pi_{y }) $ for any $y \in S$ and $z \in T$. 
    \end{lemma}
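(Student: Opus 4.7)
The plan is to exploit the bipartite structure of the $1$-skeleton of $(X,\pi)$ together with the spectral hypothesis to force the edge weights to factor as a product of a function on $S$ and a function on $T$; the decomposition of the complex then follows at once.

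First I would pass to the symmetric normalization $\tilde P := \Pi^{1/2} P \Pi^{-1/2}$, which has the same spectrum as $P$. Because the facets of $X$ are exactly edges with one endpoint in $S$ and one in $T$, the matrix $\tilde P$ has the block form
\begin{equation*}
    \tilde P \;=\; \begin{pmatrix} 0 & M \\ M^{\top} & 0 \end{pmatrix},
    \qquad M_{yz} \;=\; \frac{\pi(\{y,z\})}{\sqrt{\pi_0(y)\,\pi_0(z)}}.
\end{equation*}
Its eigenvalues come in pairs $\pm \sigma_i(M)$ together with extra zeros if $|S|\ne |T|$, with $\sigma_1(M)=1$ corresponding to the stationary vector. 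The assumption $\lambda_2(P)\le 0$ forces $\sigma_2(M)\le 0$, hence $\sigma_2(M)=0$, so $M$ has rank exactly one.

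Next I would write $M = u v^{\top}$ with $u\in\mathbb{R}^{S}, v\in\mathbb{R}^{T}$. Applying Perron--Frobenius to the nonnegative symmetric matrix $MM^{\top}$ (resp.\ $M^{\top}M$) guarantees that $u$ and $v$ can be taken entrywise nonnegative; strict positivity then follows from the identity $\sum_{z} M_{yz}\sqrt{\pi_0(z)} = \sqrt{\pi_0(y)}$ (a restatement of $\tilde P$ having right eigenvector $\sqrt{\pi_0}$), which pins down $u_y>0$ from the positivity of $\pi_0(y)$; symmetrically for $v$. Translating back, this yields $\pi(\{y,z\}) = \alpha(y)\beta(z)$ for strictly positive functions $\alpha:S\to \mathbb{R}_{>0}$ and $\beta:T\to\mathbb{R}_{>0}$.

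Finally, I would verify the two conditions of the product definition. Positivity of $\alpha$ and $\beta$ means every pair $(y',z')\in S\times T$ is a facet of $X$, so $X_z(0)=S$ and $X_y(0)=T$ for every $y,z$ and the facet set of $(X_z,\pi_z)\times(X_y,\pi_y)$ coincides with $X(1)$. A direct computation gives $\pi_z(y') = \alpha(y')/\sum_{y''}\alpha(y'')$ and $\pi_y(z') = \beta(z')/\sum_{z''}\beta(z'')$, and multiplying these together recovers exactly $\pi(\{y',z'\}) = \alpha(y')\beta(z')/(\sum\alpha)(\sum\beta)$. This matches $\pi$ on facets, so $(X,\pi) = (X_z,\pi_z)\times (X_y,\pi_y)$.

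The main obstacle is the rank-one step together with establishing strict positivity of the factors $\alpha,\beta$, since one needs this to conclude completeness of the underlying bipartite graph (otherwise $X_z$ or $X_y$ would be supported on a strict subset of $S$ or $T$). Everything after is bookkeeping; indeed, an equivalent and cleaner finish is to note that $\pi(\{y,z\}) = \alpha(y)\beta(z)$ immediately implies the generating polynomial identity $g_{(X,\pi)} = g_{(X_z,\pi_z)}\cdot g_{(X_y,\pi_y)}$, which by the characterization recalled in the preliminaries is equivalent to the product decomposition.
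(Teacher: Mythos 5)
Your proposal is correct and follows essentially the same route as the paper: both exploit the bipartite $\pm$ spectral symmetry together with the $0$-expander hypothesis to force the $S\times T$ block of the (normalized) walk matrix to have rank one, factor the edge weights as $\alpha(y)\beta(z)$, and then read off the product decomposition from the induced link distributions. Your extra Perron--Frobenius/stationary-vector step making the strict positivity of $\alpha,\beta$ explicit (so the bipartite $1$-skeleton is complete) is a detail the paper leaves implicit, but it is the same argument in substance.
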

    \begin{proof}
        Note  that $(X, \pi)$ is a weighted bipartite graph with parts $S, T$. Let $A \in \mathbb{R}^{X(0) \times X(0)}$ be the adjacency matrix of $(X, \pi)$. Let $A_{S,  T} (y, z) = A (y, z)$  for $y \in S$, $z \in T$ and $0$ on other entries.
        Moreover, let $A_{T,  S} = A - A_{S,  T}$.  Then, for any vector $v \in \mathbb{R}^{X(0)}$, we get $A = A_{S,  T} v_T + A_{T, S} v_S $, where $v_S, v_T$ are respectively supported on $S, T$  and $v = v_S +v_T$. Thus, if $A v = \lambda v$, then  $A v' = -\lambda v'$, for $v' = (-v_S + v_T)$. So if $\mu$ is an eigenvalue of $A$, then $-\mu$ is also an eigenvalue of $A$. Thus, if    $(X, \pi)$ is 0-expander, the rank of $A$ is 2. This implies that  there are vectors $w_S \in \mathbb{R}^{S}$ and $w_T \in \mathbb{R}^{T}$ such that 
        $\pi(\{y, z\}) = A(y, z) = A (z, y) = w_S(y) w_T(z)$  for $y \in S$, $z \in T$. Without loss of generality, assume $\|w_S\|_1=\|w_T\|_1 =1 $. Then, for any $y \in S$ and $z \in T$, we have $\pi_{z} (y) = \frac{\pi (\{y, z\})}{\sum_{x \in S} \pi(\{x, z\})} = w_S (y)$. Similarly $\pi_{y}(z) = w_T(z)$.  Thus  $\pi(\{y, z\}) =  \pi_{y}(z) \pi_{z}(y)$. This finishes the proof. 
    \end{proof}
        
    \begin{lemma}\label{lem:main_product}
        Consider a totally connected $(\dimension+1)$-partite complex $ (X, \pi)$ with parts indexed by $\prtin$ and its associated graph $G \coloneqq G_{(X, \pi)}$. Let $I_1 \cup \dots \cup I_\ell$ be a partition of $\prtin$ such that for any $1 \leq i \leq \ell$ the induced graph $G[I_i]$ is a  connected component or the union of several connected components of  $\graph$.  Then $(X, \pi)  = (X_{ \sigma_{-1}}, \pi_{\sigma_{-1}}) \times  \dots \times (X_{ \sigma_{-\ell}}, \pi_{\sigma_{-\ell}})$, where  $\sigma_{-i}$ is an arbitrary face of type $\prtin \setminus I_i$ for any $1 \leq i \leq \ell$. 
    \end{lemma}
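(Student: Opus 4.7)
The plan is to reduce the general statement to $\ell=2$ by an induction on $\ell$ and to handle $\ell=2$ via a ``gallery commutation'' argument that lifts the local product decomposition from \cref{lem:pairwiseindependence} to the whole complex. For the inductive step I will split off $I_1$ from $I_2'\coloneqq I_2\cup\dots\cup I_\ell$; once the $\ell=2$ decomposition gives the factor complex $(X_{\sigma_{-2}},\pi_{\sigma_{-2}})$ on type $I_2'$, I will check that it is itself totally connected (since any link of a totally connected complex is totally connected) and that its associated graph is a subgraph of $G[I_2']$ in which $I_2,\dots,I_\ell$ are still unions of connected components, so the inductive hypothesis applies.

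The $\ell=2$ case rests on the following local fact: for every face $\sigma$ of type $\prtin\setminus\{i,j\}$ with $i\in I_1$ and $j\in I_2$, the pair $\{i,j\}$ is a non-edge of $G$, so $(X_\sigma,\pi_\sigma)$ is a $0$-expander 2-partite weighted graph. \cref{lem:pairwiseindependence} then yields functions $\alpha_\sigma,\beta_\sigma$ with $\pi_\sigma(\{x,y\})=\alpha_\sigma(x)\beta_\sigma(y)$, from which I extract two consequences: (i) the bipartite graph is \emph{complete} on its support, so whenever three of the four facets $\sigma\cup\{x,y\},\sigma\cup\{x,y'\},\sigma\cup\{x',y\},\sigma\cup\{x',y'\}$ lie in $X(\dimension)$, the fourth does too; and (ii) the log-linearity $\pi(\sigma\cup\{x,y\})\,\pi(\sigma\cup\{x',y'\})=\pi(\sigma\cup\{x,y'\})\,\pi(\sigma\cup\{x',y\})$ holds in that situation.

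Given facets $\eta=\tau_1\cup\tau_2$ and $\eta'=\tau_1'\cup\tau_2'$ with $\tau_k,\tau_k'$ of type $I_k$, I gallery-connect them (using total connectivity and \cite[Prop.~3.6]{Opp18}) and repeatedly \emph{commute} an $I_2$-swap with an adjacent $I_1$-swap: a segment $A\cup B\to A\cup B'\to A'\cup B'$ can be rewritten as $A\cup B\to A'\cup B\to A'\cup B'$, because the intermediate facet $A'\cup B$ exists by (i) applied to the unchanged face $\sigma$ of type $\prtin\setminus\{i,j\}$. Bubble-sorting the gallery produces a path in which all $I_1$-swaps precede all $I_2$-swaps, so in particular $\tau_1'\cup\tau_2$ and $\tau_1\cup\tau_2'$ are facets. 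Hence $X(\dimension)=Y_1\times Y_2$, where $Y_k$ is the marginal support on type $I_k$, and the facets of $X_{\sigma_{-2}}$ equal $Y_1$ for every $\sigma_{-2}\in Y_2$.

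Finally, to factor the distribution, fix $\tau_1,\tilde\tau_1\in Y_1$ that differ in a single vertex. By (ii), the ratio $\pi(\tau_1\cup\tau_2)/\pi(\tilde\tau_1\cup\tau_2)$ is invariant under a single-vertex move in $\tau_2$, and gallery connectivity of $X_{\tau_1}$ promotes this to independence of $\tau_2$. Telescoping along a single-swap path in $Y_1$ (which exists by gallery connectivity of $X_{\tau_2}$) extends the conclusion to arbitrary $\tau_1,\tilde\tau_1$, so $\pi(\tau_1\cup\tau_2)=f(\tau_1)\,g(\tau_2)$; normalization then identifies $f$ with $\pi_{\sigma_{-1}}$ and $g$ with $\pi_{\sigma_{-2}}$, yielding the product decomposition. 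The main obstacle I anticipate is making the commutation step airtight across every gallery configuration, which is exactly where the ``completeness on support'' extracted from \cref{lem:pairwiseindependence} is essential.
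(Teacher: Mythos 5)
Your proposal is correct, but it takes a genuinely different route from the paper's. The paper proceeds by induction on the dimension and works algebraically with the generating polynomial: applying the inductive hypothesis to vertex links it writes $\partial_{z_x} g_{(X,\pi)} = f^x\cdot g^x$, uses second derivatives together with \cref{lem:connectivity} to show the factor $g^x$ is one and the same polynomial $h$ for all $x$ whose type lies in the chosen block, and then recovers the factorization $|S|\,g_{(X,\pi)}=\sum_x z_x\,\partial_{z_x}g_{(X,\pi)}=h\cdot h'$ from partiteness, identifying $h,h'$ with link generating polynomials; \cref{lem:pairwiseindependence} enters only in the base case $\dimension=1$. You instead induct on the number of blocks $\ell$ and handle the two-block case combinatorially: \cref{lem:pairwiseindependence} is invoked at every codimension-2 face whose type omits one index from each block (this is exactly where the hypothesis enters, since the absence of $G$-edges across blocks makes those links $0$-expanders), yielding the exchange property (three of the four facets force the fourth) and the cross-ratio identity; gallery connectivity (the same \cite[Prop.~3.6]{Opp18} the paper relies on) plus bubble-sorting of swaps gives the set-level product $X(\dimension)=Y_1\times Y_2$, and telescoping the cross-ratio identity factors $\pi$, after which normalization identifies the factors with the link distributions. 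Your inductive step is sound for the reasons you give: links of a totally connected complex are totally connected, and the associated graph of the factor on $I_2\cup\dots\cup I_\ell$ is a subgraph of $G$ restricted to those parts, so each $I_i$ remains a union of connected components. The only detail worth spelling out is the degenerate case of a singleton block, where the appeals to gallery connectivity of $X_{\tau_1}$ or $X_{\tau_2}$ concern faces of codimension 1; there the required single-swap connectivity is trivial, so nothing breaks. In exchange for this extra bookkeeping your argument avoids generating polynomials entirely and makes the local-to-global mechanism (exchange property plus cross-ratio) explicit, while the paper's algebraic route is more compact once \cref{lem:connectivity} is in hand.
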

        \begin{proof}
        We prove the statement by induction on $\dimension$. For $ \dimension = 1$,  the statement simply follows from \cref{lem:pairwiseindependence}. Now, assume that $\dimension>1$.	If $|I_i| =  1$  for all $1 \leq i \leq \ell$, then $\ell \geq 3$. In this case,  let  $S \coloneqq  I_1 \cup I_2$. Otherwise, WLOG assume that $|I_1| \geq 2$ and let $S \coloneqq I_1$. First, we show that $g_{(X, \pi)}$ can be written as $g_{(X, \pi)} = h \cdot h'$, where $h$ is a polynomial  in   $\{z_y :  \type(y) \in     I \setminus S \} $ and  $h'$ is a polynomial in terms of variables  in   $\{z_y : \type(y)  \in   S \} $. 
        By induction hypothesis, for any $\ti \in S$,  $x \in \prt_\ti$, and any face $\sigma \in X$ of type $S $  such that $x \in \sigma$
        \begin{equation}\label{eq:indhyp} 
            \partial_{z_x} g_{(X,\pi)} = f^x \cdot g^{x}
        \end{equation}
        where $f^x$  is a polynomial in terms of variables  in   $\{z_y :  \type(y) \in    S \setminus \ti \} $ and  $g^x$ is a polynomial in terms of variables  in   $\{z_y : \type(y)  \in   I \setminus S \} $.  
        Now, take  arbitrary    $\ti , \tj \in S$ such that $\ti\neq \tj$. Then,  \eqref{eq:indhyp} implies that for any  face $\{x, y\}$ of type $\{\ti, \tj\}$
        	$$ \partial_{z_x}\partial_{z_y} g_{(X,\pi)} =   (\partial_{z_y} f^x)  
        g^x	= (\partial_{z_x} f^y) g^y$$
        It thus follows that $g^x$ is a multiple of $g^y$. One can see this simply by substituting 1 for all variables in $\{z_y :  \type(y)  \in S \setminus \{\ti, \tj \} \}$. Moreover, since $g^x$ and  $g^y$ are  generating polynomials of distributions, i.e. the coefficients  sum up to 1, we get $g^x = g^y$. Therefore, we get that for any distinct $x, y$  such that  $\type(x), \type(y) \in S$ and $\{x, y\}$ is a face,  $g^x = g^y$.  Applying \cref{lem:connectivity}, we get $g^x  = g^y $ for all $x, y \in \cup_{\ti \in   S } \prt_\ti$.    Thus, there exist a polynomial $h$ in variables $\{z_y : \type(y ) \in  I \setminus S \} $ such that we can rewrite  \eqref{eq:indhyp} for any $x$ with $\type (x) \in S$ as 
        	$$ \ \partial_{z_x} g_{(X,\pi)} = f^x \cdot h,
        	$$
        where $f^x$  is a polynomial in terms of variables  in   $\{z_y : \type(y) \in S \setminus \ti \} $. 
        Finally, since $X$ is a partite complex,
        
          \begin{equation}\label{eq:product} 
            |S| g_{(X,\pi)}= \sum_{\ti \in S} \sum_{x \in \prt_\ti} z_x  \partial_{z_x} g_{(X,\pi)} = h \cdot  \sum_{\ti \in S} \sum_{x \in \prt_\ti}  z_x f^x = h \cdot h',
        \end{equation}
         where $h' = \sum_{\ti \in S} \sum_{x \in \prt_\ti}  z_x f^x $ is a polynomial in $\{z_y : \type(y) \in  S  \} $. It remains to show that for any face $\sigma$ of type $S$, we have $h = g_{(X_\sigma, \pi_\sigma)}$, and for any $\tau$ of type $\prtin \setminus S$, we have $h'= g_{(X_\tau, \pi_\tau)}$. Fix arbitrary faces $\sigma$ of type $S$ and $\tau$ of type $\prtin \setminus S$. Noting that $g_{(X,\pi)}$ is a multiple of $h \cdot h'$, and that $h'$ is in variables associated to elements whose types are in  $S$ and  $h$ is in variables associated to elements whose types are in $\prtin \setminus S$, we conclude that $h'$ has a monomial that is a multiple of  $\prod_{x \in \sigma} z_{x}$  and $h$ has a monomial that is a multiple of  $\prod_{x \in \tau} z_{x}$. First, take $(\prod_{x \in \sigma} \partial_{z_{x}})$ from both sides of $\eqref{eq:product}$. We get that  $g_{(X_{\sigma}, \pi_\sigma)}$  is a positive multiple of $h$. Similarly, taking $(\prod_{x \in \tau} \partial_{z_{x}})$ from both sides of $\eqref{eq:product}$, we get that  $g_{(X_{\tau}, \pi_\tau)}$  is a positive multiple of $h'$. Thus, noting that the coefficients of generating polynomials sum up to 1, we get $h =  g_{(X_\sigma, \pi_\sigma)}$ and $h' = g_{(X_{\tau}, \pi_\tau)}$ as desired. Repeating the same argument inductively on the complex $(X_{\sigma}, \pi_{\sigma})$ proves the claim. 
    \end{proof}

     Now we are ready to prove \cref{prop:Dcondition}. 
    \begin{proof}[Proof of \cref{prop:Dcondition}]
        We apply \cref{thm:matrix-trickle-down}. 
        For every $S \subset \prtin$ such that $|S| <\dimension$, define a diagonal matrix $D_S \in \mathbb{R}^{X(0) \times X(0)}$ as $D_S(x, x) = \vect_{S} (\type (x))$ for all  $ x \in X(0)$.  We prove that  the conditions of \cref{thm:matrix-trickle-down} hold  for $M_\tau \coloneqq  \frac{\Pi_\tau D_S}{k-1}$ for an arbitrary face $\tau \in X$ of co-dimension at least $k \geq 2$ and type $S$. If $G_S$ is connected, $\max_{\ti \in \prtin} \vect_S (\ti) \leq \frac{(k-1)^2}{3k -1}$ holds by assumption.  If $G_S$ is disconnected, $\max_{\ti \in \prtin} \vect_S (\ti) \leq \frac{(k-1)^2}{3k -1}$   follows from the assumptions that $\vect_S = \sum_{1 \leq i \leq \ell: |I_i| \geq 2}  \vect_{[d] \setminus I_i},$ where $I_1 \cup \dots \cup I_\ell $ are the vertices of connected components of $G_S$. That is because the supports of vectors $\vect_{[d] \setminus I_i}$ are disjoint by assumption and $\frac{(k-1)^2}{3k-1}$ is an increasing function for $k\geq 2$.  
        So, we get  $D_\tau \preceq \frac{(k-1)^2}{3k -1} I $, and thus,   $M_\tau \preceq \frac{k-1}{3k -1} \Pi_\tau$. To prove the rest of the conditions hold, first assume that $k =2$.  If $G_S$ is two disconnected vertices, we get  $f_S = 0$, and therefore, $D_S = 0$. Thus, we  get  $\Pi_\tau P_\tau - \pi_{\tau, 0}\pi_{\tau, 0}^\top \preceq 0 = \Pi_\tau D_S = M_\tau$, as desired.   If $G_S$ is connected, the base case assumption (\cref{item:cond1}) implies that $\lambda_2 (P_\tau) \leq  D_S (x, x)$ for all $x \in X_\tau (0)$.  Therefore, $\Pi_\tau P_\tau - \pi_{\tau, 0}\pi_{\tau, 0}^\top \preceq \Pi_\tau D_S = M_\tau $.   Now, assume $k \geq 3$. First assume that $G_S$ is disconnected and $\graph[I_1], \dots, \graph[I_\ell]$ are its connected components for some partition $I_1 \cup \dots \cup I_\ell$ of $\prtin \setminus S$.  Fix   any  $\sigma \in X_\tau (k-1)$. By \cref{lem:main_product},  $(X_\tau, \pi_{\tau}) =    (X_{\tau \cup \sigma_{-1}}, \pi_{\tau \cup  \sigma_{-1}}) \times \dots \times(X_{\tau \cup \sigma_{-\ell}},  \pi_{\tau \cup  \sigma_{-\ell}})$ where for every $1 \leq j \leq \ell $, $\sigma_{-j}$ is a subset of $\sigma$ that has type $ \prtin \setminus \left (S \cup I_j \right)$.  Therefore, we get   $\Pr_{\eta \sim \pi_{\tau \cup \sigma_{-j}}} [x\in \eta] = \Pr_{\eta \sim \pi_{\tau}} [x\in \eta]$ for all $1 \leq j\leq \ell$ and $x \in X_{\tau \cup \sigma_{-j}} (0)$. Combining this with \cref{fact:partite_normalize}, we get $k_j  \cdot  \pi_{\tau \cup \sigma_{-j}, 0} (x) = k \cdot \pi_{\tau, 0  } (x)$,  where $k_j = |I_j|$ for all $1 \leq j\leq \ell$. Thus we can write
        \begin{align*}
            \sum_{1 \leq j \leq \ell : |I_j| \geq 2} \frac{(k_j-1) k_j}{(k-1) k} M_{\tau  \cup \sigma_{-j}} &\underset{\text{def of } M_{\tau\cup_{\sigma_{-j}}}} =   \sum_{1 \leq j \leq \ell : |I_j| \geq 2} \frac{(k_j-1) k_j}{(k-1) k} \frac{\Pi_{\tau\cup_{\sigma_{-j}}} D_{\prtin \setminus I_j}}{k_j-1}  \\  &=  \sum_{1 \leq j \leq \ell : |I_j| \geq 2}  \frac{ k_j}{ k (k-1)}  \frac{k}{k_j} \Pi_\tau  D_{\prtin \setminus I_j} \\ &= \frac{ \Pi_\tau }{k-1}   \sum_{1 \leq j \leq \ell : |I_j| \geq 2}    D_{\prtin \setminus I_j} 
            = \frac{\Pi_\tau D_S}{k-1}=M_\tau, 
        \end{align*}
        where in the second to last equality, we used the fact that $\sum_{1 \leq j \leq \ell : |I_j| \geq 2}    f_{\prtin \setminus I_j} = f_S$, and thus $\sum_{1 \leq j \leq \ell : |I_j| \geq 2}    D_{\prtin \setminus I_j} = D_S$ by definition of $D_S$. 
        Now, assume that $G_S$ is connected. It is enough to show that   $\mathbb{E}_{x \sim \pi_{\tau, 0} }  M_{\tau \cup x}   \preceq   M_{\tau} -  M_\tau\Pi_\tau^{-1} M_\tau$.  This is equivalent to showing that for any $x \in X_\tau (0)$  
        \begin{align} \label{eq:Dcond}
            & \mathbb{E}_{y \sim \pi_{\tau, 0}} \left[  \frac{ (\Pi_{\tau}^{-1}\Pi_{\tau \cup y} D_{S \cup  \type (y) }) (x, x)}{k - 2 } \right]\leq    \frac{D_{S} (x, x) }{k - 1}  -  \frac{ D^2_{S} (x, x)} {(k - 2)(k - 1)}   
            \end{align}
        One can check that   for any $x \in X_\tau (0)$ of type $ \ti$
        \begin{align*}
        \mathbb{E}_{y \sim \pi_{\tau, 0}} \left[  \frac{ \Pi_{\tau}^{-1} \Pi_{\tau \cup y} D_{S \cup  \type (y) } (x, x)}{k - 2 } \right] &=  \frac{\sum_{y \in X_{\tau\cup x}(0)} \Pr_{\sigma \sim \pi_{\tau \cup x}} [y\in \sigma]  D_{S  \cup \type (y)}(x, x)}{(k-1)(k-2)} \\  
        &= \sum_{ \tj \in \prtin \setminus S} \frac{ \vect_{\tau \cup \tj}(i )} {(k-1)(k-2)}  \sum_{{\substack{y\in X_{\tau\cup x}(0):\\ \type(y) = \tj}}}\Pr_{\sigma \sim \pi_{\tau \cup x}} [y\in \sigma] 
        \\& = \frac{\sum_{ \tj \in \prtin \setminus S} \vect_{\tau \cup \tj}(i)}{(k-1)(k-2)},
        \end{align*}
        where in the last equality, we used \cref{fact:partite_sum_to_1}. Thus,  substituting  $D_{S} (x, x) = \vect_S (\type(x))$ in the RHS of \eqref{eq:Dcond}, it is enough to show that for any $\ti \in \prtin \setminus S$
        \begin{align*}
        \frac{\sum_{ \tj \in \prtin \setminus S} \vect_{\tau \cup \tj}(\ti)}{(k-1)(k-2)} \leq \frac{\vect_{S} (\ti)}{k - 1} -   \frac{\vect^2_{S}(\ti) }{(k -1)(k -2) },
        \end{align*}
        which holds by assumption \cref{item:cond2}. 
    \end{proof}

\section{Proof of Main Theorem}
    We are ready to prove \cref{thm:main_technical}. 
    
    \begin{proof} [Proof of \cref{thm:main_technical}]
         We find a family of vectors   $\{\vect_S \in \mathbb{R}^{\prtin}\}_{S \subset \prtin: |S| < \dimension}$ that satisfy the conditions of theorem \cref{prop:Dcondition}.
         Let $G \coloneqq G_{(X, \pi)}$. 
        Based on the  conditions of \cref{prop:Dcondition}, vectors 
        $\{\vect_S \in \mathbb{R}^{\prtin}\}_{S \subset \prtin: |S| < \dimension}$  can be defined as functions of  $\{\eps_{\{i,j\}}\}_{i, j\in \prtin, i\neq j}$. Recall that edges of $G$ capture pairs $\{i,j\}$ for which $\epsilon_{\{i,j\}} > 0$.  Assign every edge $\{i,j\}$ of $G$ with weight $\epsilon_{\{i,j\}}$. 
        We restrict our attention to functions that are very local with respect to $G$, i.e. for every $S$ and $i \in \prtin \setminus S$, we assume $f_S(i)$ only depends on $\Delta_S(i)$ and the weights of edges adjacent to $i$ in $G_S$  if $\Delta(i) >1$. It turns out that if $\Delta(i) =1$, we would need to also take into account the degree of the unique  neighbor of $i$.   More formally, consider the following family of vectors $\{\vect_S \in \mathbb{R}^{\prtin}\}_{S \subset \prtin: |S| < \dimension}$: 
        for any $S \subset \prtin$ such that  $|S| < \dimension$, let $\vect_S$  be of the following form: for any $\ti \in S$, let $\vect_S(\ti) = 0$,  and for any $\ti \in  \prtin \setminus S$ let 
        \begin{align*}
            \vect_S(\ti) \coloneqq
        \begin{cases*}
            0 & if $\Delta_S(\ti) =0$, \\
            \epsilon_{\{\ti, \tj\}} \cdot g_{\ti, \tj}(\Delta_S(\tj)) & if $\Delta_S(\ti) =1 $ and $\ti \sim_S \tj$, \\
          \sum_{\tj \sim_S \ti} \epsilon_{\{\ti, \tj\}} \cdot  h_{\ti} \left(\Delta_S (\ti) \right)   & if $\Delta_S(\ti) \geq 2$, 
        \end{cases*}
        \end{align*}
        where for every  $\ti \in \prtin$ and $\tj \sim \ti$, functions $g_{\ti, \tj}, h_{\ti}: \{1, \dots, \Delta\}  \rightarrow \mathbb{R}_{\geq 0}$  are defined later in a way that  guarantees that   $\{\vect_S\}_{S \subset \prtin: |S| < \dimension}$ satisfies the assumptions of \cref{prop:Dcondition} (see \eqref{eq:fdef}, \eqref{eq:hdef}). 
        \par
        First, consider the case that  $G_S$ is disconnected.  Note that  for any $S, S' \subset \prtin$ such that  $|S|, |S'| < \dimension$, if $\{\tj \in \prtin: \tj \sim_{S} \ti\}= \{\tj \in \prtin: \tj \sim_{S'} \ti\}$ for some $\ti \notin S, S'$, then $\vect_S(\ti) = \vect_{S'}(\ti) $. Let $I_1, \dots, I_\ell$ be the vertices of connected components of $G_S$.  Since the neighborhood of each vertex in any connected component of $G_S$ is the same as its neighborhood in $G_S$, we get $\vect_S = \sum_{1 \leq i \leq \ell: |I_i| \geq 2}  \vect_{[d] \setminus I_i}$. 
        \par
          Now, assume $G_S$ is connected. 
          Take an arbitrary $k \geq 2$ and  $S \subset \prtin$ of size $(\dimension+1) - k $. First we  verify  the set of conditions given in \cref{item:cond1} and \cref{item:cond2}.
          First, assume that $k=2$.  Let  $ \prtin \setminus S = \{\ti, \tj\}$. By definition of $\epsilon_{\{\ti, \tj\}}$,   for any $\tau$ of type $S$, $\lambda_2(P_\tau) \leq  \epsilon_{\{\ti, \tj\}}$. Thus, if we define $g_{\tl, \ttt} (1) = 1$ for all distinct $\tl, \ttt \in \prtin$, then we get   $\lambda_2(P_\tau)\leq  \epsilon_{\{\ti, \tj\}} = \epsilon_{\{\ti, \tj\}} g_{\ti, \tj} (1)  =  \vect_S(\ti) =  \vect_S(\tj)$, as desired. Now, assume that $k\geq 3$. Fix  an arbitrary $\ti \in \prtin \setminus S$. Our goal is to define  $g_{\ti, \tj}, h_\ti: \{1, \dots, \Delta\}  \rightarrow \mathbb{R}_{\geq 0}$ for all $\tj \sim \ti$ such that $g_{\ti, \tj} (1) = 1$ for all  $\tj \sim \ti$ and  the following inequality is satisfied: 
                \begin{align}\label{eq:main-rec}
                    \sum_{\tj \in \prtin \setminus (S \cup \ti)}   \vect_{S \cup \tj} (\ti) \leq  (k-2) \vect_S (\ti)- \vect_S^2 (\ti).
                \end{align}
        To keep the notation concise, relabel the elements such that $i$ is relabeled to $0$  and  $\epsilon_{\{0, 1\}} \geq \dots \geq \epsilon_{\{0, \dimension\}}$. Moreover, let $\epsilon_\tj \coloneqq \epsilon_{\{0, \tj\}}$ for any $\tj \in \prtin \setminus 0$. 
        \par
        {\bf Case 1:  $\Delta_S (0) = 1$, and $ \tj \sim_S 0$.}  Since $G_S$ is connected and $(\dimension+1) - |S| \geq 3$, we have  $\Delta_S( \tj) \geq 2$. Let $t \coloneqq \Delta_S (\tj)$. We have
        \begin{align*}
            \sum_{\tl \in \prtin \setminus (S \cup 0)}  \vect_{S \cup \tl} (0) &= \vect_{S \cup \tj} (0)  + \sum_{\tl \in \prtin \setminus (S \cup 0): \tl \sim_S \tj}  \vect_{S \cup \tl} (0) +  \sum_{\tl \in \prtin \setminus (S \cup 0):  \tl \not\sim_S \tj, \tl \neq \tj}  \vect_{S \cup \tl} (0) \\& 
             = 0 + (t-1) \cdot  \epsilon_{\tj} \cdot g_{{0, \tj}}(t -1) + (k - t -1) \cdot  \epsilon_{\tj} \cdot g_{{0, \tj}}(t ).
        \end{align*}
        On the other hand,  $(k-2) \vect_S(0) - \vect_S(0)^2 =  (k-2) \cdot \epsilon_{\tj} \cdot g_{0, \tj} (t) - \epsilon^2_{\tj} \cdot g^2_{0, \tj}(t)$.    
        So  it is enough to satisfy 
        \begin{align}\label{eq:case1finaleq}
          (t-1) \cdot  \epsilon_{ \tj} \cdot ( g_{0, \tj} (t) - g_{0, \tj} (t -1) )\geq  \epsilon^2_{\tj } \cdot  g_{0, \tj}^2(t). 
        \end{align}
        Now, define $g_{0, \tj}: \{1, \dots, \Delta\} \rightarrow \mathbb{R}_{\geq 0}$ as follows:  recall that we defined $g_{0, \tj}  (1) =1$. For any $2 \leq \ell \leq \Delta$, let 
        \begin{align} \label{eq:fdef}
            g_{0, \tj} (\ell) = 1 + 1.3 \cdot \epsilon_{\tj}  \cdot H_{\ell-1} . 
        \end{align}
        Using assumption \eqref{eq:thmcond1}, $\epsilon_{\tj} H_{\Delta -1}   \leq \frac{\delta^2}{10} \leq \frac{1}{10}$. Thus
        \begin{align*} 
            \epsilon^2_\tj \cdot  g^2_{0, \tj} (t) \leq  \epsilon_{\tj}^2 \left(1 + 1.3 \epsilon_\tj\left(1+  H_{\Delta -1} \right)\right)^2  < 1.3 \epsilon_\tj^2.
        \end{align*}
        Substituting $ g_{0, \tj}(t)$ according to \eqref{eq:fdef} and using the above bound, one can verify that   \eqref{eq:case1finaleq} holds. 
        \par
        {\bf Case 2:  $\Delta_S (0) \geq 2$.}
         For simplicity of notation, let $t \coloneqq \Delta_S (0)$ and $\alpha \coloneqq \sum_{ \tj: \tj \sim_S 0 } \epsilon_\tj$.  Define $h_0 (1) =\max_{\tj: \tj \sim 0} g_{0,\tj}(\Delta)$. 
        \begin{align*}
            \sum_{\tj \in \prtin \setminus (S \cup 0)}  \vect_{S \cup \tj} (0) & =  \sum_{\tj \in \prtin \setminus (S \cup 0): \tj \sim_S 0}  \vect_{S \cup \tj} (0) +  \sum_{\tj \in \prtin \setminus (S \cup 0):  \tj \not\sim_S 0}  \vect_{S \cup \tj} (0) 
            \\&\leq    \left(\sum_{\tj \in \prtin \setminus (S \cup 0): \tj \sim_S 0} (\alpha - \epsilon_{\{0, \tj\}}) \right)  \cdot h_0 (t-1) +  (k- t-1) \cdot \alpha  \cdot h_0 (t) \\& =  (t-1) \cdot \alpha  \cdot h_0 (t-1) +  (k- t-1) \cdot \alpha \cdot  h_0 (t). 
        \end{align*}
        Note that if $t\geq 3$, the first inequality is an equality by definition. If $t = 2$, the  first inequality follows from the definition of $h_0(1)$. 
        Thus, it is enough to satisfy 
         \begin{align*}
             \sum_{\tj \in \prtin \setminus (S \cup 0)}  \vect_{S \cup \tj} (0) &= (t-1) \cdot \alpha  \cdot h_0 (t-1) +  (k- t-1) \cdot \alpha \cdot  h_0 (t) \\& \leq    (k-2) \cdot \alpha  \cdot h_0 (t ) - \alpha^2  \cdot h_0^2 (t ) = (k-2)  \vect_{S} (0)-   \vect_{S}^2 (0). 
        \end{align*}
        Equivalently, it suffices to satisfy  
        \begin{align}\label{eq:case2finaleq}
             (t-1) (h_0 (t) - h_0 (t-1)) \geq   \alpha  \cdot h_0^2 (t ). 
        \end{align}
        Now, define $h_0 : \{1, \dots, \Delta\} \rightarrow \mathbb{R}_{\geq 0}$ as follows: recall that we defined $h_0 (1) =\max_{\tj: \tj \sim 0} g_{0,\tj}(\Delta)$. For any $2 \leq \ell \leq \Delta$, let
        \begin{align}\label{eq:hdef}
            h_0(\ell) \coloneqq \frac{h_0(1)}{1 - \const \left( 
            \sum_{\tj=1}^\ell \epsilon_\tj H_{\tl-1} (\tj-1) \right)}.
        \end{align}
        We need to prove \eqref{eq:case2finaleq} for a carefully chosen $\const$.  Let $\beta$ be such that 
         $h_0 (t)  = \frac{h_0(1)}{\beta}$. We get  $h_0 (t-1) = \frac{h_0(1)}{\beta+\const (\sum_{\tj =1}^t \frac{\epsilon_\tj}{t-1}) }$, and thus,  
        \begin{align*}
            (t-1) (h_0 (t) - h_0 (t-1)) = \frac{ h_0(1) \cdot \const \sum_{\tj=1}^t \epsilon_\tj }{\beta \cdot (\beta + \frac{\const \sum_{\tj=1}^t \epsilon_\tj}{t-1} )}.   
        \end{align*}
        Note that $\alpha \cdot h^2_0 (t) = \frac{\alpha \cdot  h^2_0(1) }{\beta^2}$. Thus,  to satisfy \eqref{eq:case2finaleq}, it is enough to show that 
        \begin{align*}
           \beta  \cdot  \const  \cdot \left(\sum_{\tj=1}^t \epsilon_\tj\right) \geq  \alpha \cdot  h_0(1)  \cdot \left(\beta + \frac{\const \sum_{\tj=1}^t \epsilon_\tj}{t-1}\right). 
        \end{align*}
        Note that 
        \begin{align} \label{eq:boundh1}
             h_0(1) \leq \max_{\tj \sim \ti} g_{0, \tj} (\Delta) =   1+ 1.3 \epsilon_1 H_{\Delta-1} \underset{\text{by } \eqref{eq:thmcond1}}{\leq}  1 + 1.3 \frac{\delta^2}{10}.
        \end{align}
Moreover, $ \sum_{\tj=1}^t \epsilon_\tj \geq  \sum_{\tj: \tj \sim_S 0} \epsilon_\tj = \alpha$. Thus, letting $\const=1+c'\delta$ for some $c'>0$ that we choose later, it is enough to show that 
         \begin{align*}
           \beta  \cdot  (c'-0.13 \delta)\delta  \geq   (1+0.13\delta) \cdot  \frac{(1+c'\delta) \sum_{\tj=1}^t \epsilon_\tj}{t-1}. 
        \end{align*}
        Using $\frac{\sum_{\tj=1}^t \epsilon_\tj}{t-1} \leq 2 \epsilon_1 
        \underset{\eqref{eq:thmcond1}}{\leq} \frac{\delta^2}{5}$, it is enough to show that
         \begin{align}\label{eq:main-condition-delta}
           \beta  \cdot  (c'-0.13 \delta)  \geq   (1+0.13\delta)(1+c'\delta)  \frac{\delta}{5}.
           \end{align}
        On the other hand,
         \begin{align}\label{eq:beta} 
            \beta &\geq 1 - (1+c'\delta)  \left( 
              \sum_{\tj=1}^{\Delta(0)} \epsilon_\tj H_{\Delta(0)-1} (\tj-1)  \right)  \underset{\eqref{eq:thmcond2}}{\geq} 1- (1+c'\delta)(1-\delta ) =   \delta  (1- c' + c'\delta),
        \end{align}
        Thus, to satisfy \eqref{eq:main-condition-delta}, it is enough to show that  $  (1- c'+ c' \delta) (c' -0.13 \delta)\geq (1+1.13 \delta) (1+c'\delta)\frac{1}{5}$. Letting $c'  = \frac{1}{2}$,  this inequality holds  for every $0 < \delta < 1$. 
This establishes $\cref{eq:main-rec}$. So we verified conditions \cref{item:cond1} and \cref{item:cond2} are satisfied. 
\par

To show that all conditions of \cref{prop:Dcondition} are satisfied, it remains to show that  $\max_{\ti \in \prtin} \vect_S (\ti) \leq \frac{(k-1)^2}{3k -1}$.   
        Note that $\sum_{\tj: \tj \sim \ti} \epsilon_{\{\ti,\tj\}}  \leq  \Delta_S \cdot \epsilon_1  \underset{\eqref{eq:thmcond1}}{\leq} \Delta_S \cdot \frac{\delta^2}{10}$ for all $\ti \in \prtin \setminus S$. 
        Thus, we get $\max_{\ti \in \prtin} \vect_S  (\ti) \leq \Delta_S \cdot \frac{\delta^2}{10}  \max_{\ti \in \prtin \setminus S} \cdot h_\ti (\Delta_S(\ti))$. Moreover, using  \eqref{eq:boundh1} and  \eqref{eq:beta} with $c' = \frac{1}{2}$ (we can write this inequality for every $\ti$), we get  
        \begin{align}\label{eq:boundh}
              h_\ti (\Delta_S(\ti)) \leq h_\ti (\Delta(\ti)) \leq  \frac{1+\frac{\delta^2}{10}}{\delta(\frac{1}{2}+ \frac{\delta}{2})},
        \end{align}
       Thus, we can write 
        \begin{align*}
            \max_{\ti \in \prtin} \vect_S  (\ti) &\leq \Delta_S \cdot \frac{\delta^2}{10}  \frac{ 1+\frac{\delta^2}{10} }{\delta (\frac{1}{2} + \frac{\delta}{2})} \leq  \frac{\Delta_S}{5} \leq \frac{k-1}{5}\leq \frac{(k-1)^2}{3k -1},
        \end{align*}
        as desired. 
        So we proved that  $\{\vect_S\}_{S \subset \prtin: |S| < \dimension}$ satisfies the conditions of \cref{prop:Dcondition}. Now, we are ready to  bound   $\lambda_2(P_\tau)$ for any face  $\tau$ of co-dimension $k\geq 2$ and type $S$. 
        First, we show  that for every $\ti \in \prtin \setminus S$, $ \sum_{\tj : \tj \sim_S \ti} \epsilon_{\{\ti, \tj\}}  \leq 1-\delta$. Note that  $$\sum_{\ell=1}^{\Delta(i)} H_{\Delta(\ti)-1}(\ell-1) = \sum_{\ell =2}^{\Delta (\ti) } \frac{\ell}{\ell-1} = 2 + \sum_{\ell =3}^{\Delta (\ti) } \frac{\ell}{\ell-1} \geq \Delta (\ti).$$ Thus, we can write 
        \begin{align}\label{eq:sumepsij}
        \sum_{\tj: \tj \sim_S \ti} \epsilon_{\{\ti, \tj\}} \leq \left(\sum_{\ell=1}^{\Delta(\ti)}\frac{H_{\Delta(\ti)-1}(\ell-1)} {\Delta(\ti)}  \right) \left( \sum_{\tj \sim \ti} \epsilon_{\{\ti, \tj\}}   \right)\leq  \sum_{\ell=1}^{\Delta (\ti)} H_{\Delta(\ti) -1}(\ell-1)  \cdot \epsilon_{\{\ti,\tj_\ell\}} \leq 1-\delta. 
        \end{align}
        where we assumed that  $\ti_1, \dots, \tj_d$ is an ordering of $\prtin \setminus S$ such that $\epsilon_{\tj_1}\leq \dots \leq  \epsilon_{\tj_\dimension}$. Using this inequality and \eqref{eq:boundh}, we get 

             \begin{align*}
            \lambda_2(P_\tau) &\leq \frac{ \max_{\ti \in \prtin \setminus S} \vect_S(\ti) }{k-1} \leq  \frac{ \max_{\ti \in \prtin} ( \sum_{\tj \sim_S \ti} \epsilon_{\{\ti, \tj\}}) \cdot h_\ti (\Delta_S(\ti)) }{k-1} \\& \leq \frac{ (1-\delta)\cdot \max_{\ti \in \prtin} h_\ti (\Delta(\ti) ) }{k-1} \leq   \frac{(1-\delta) \cdot \frac{2(1+\frac{\delta^2}{10}\delta)}{\delta (\delta+1)}}{k-1},
        \end{align*}
        as desired. 
    \end{proof}

\section{Obstructions to Trickle Down} 
 In this section, we exhibit some barriers to any class of trickle down theorems that only look at 2nd eigenvalue of links of co-dimension 2. 
 
        \begin{example} \label{ex:hardcore}
We exhibit a family of partite complexes such that $\gamma_2\leq \lambda$, but $\gamma_{2d+1}\geq \Omega(\lambda)$.  Consider a $2d$-partite complex $(X,\pi)$ on the ground set  $\cup_{i}^{2d} T_i$, where for all $i \in \{1, \dots, d\}$, $T_i = \{i_{in}, i_{out}\}$ are the elements of type $i$. A set $\tau\in \prt_1\times \dots\times \prt_{2d}$ is a facet of $X$ iff  for all $i_{in},j_{in}\in \tau$ we either have $i_{in},j_{in}\leq d$ or $i_{in},j_{in}\geq d+1$. 
For such a $\tau$, we define  $w(\tau)=\lambda^{\norm{\tau}_1}$ where $\norm{\tau}_1$ is the number of $i\in \tau$, for some $0<\lambda<1$ that we choose later (the weight of the complex is the probability distribution induced by this weight function, but for simplicity of calculations,  we do not normalize the weights here).

As a side note, facets of this complex corresponds to the set of independent sets of $K_{d,d}$ the complete bipartite graph on the sets $\{1,\dots,d\}, \{d+1,\dots,2d\}$. It is not hard to see that we have $\gamma_2=\frac{\lambda}{1+\lambda}$. 
To see this, note that a  worst link  to take is  $\tau=\{2_{out},\dots,(2d-1)_{out}\}$ with the the following 1-skeleton. 

Notice that this graph  is a $\frac{\lambda}{1+\lambda}$-spectral expander.
\begin{figure}[htb]\centering
\begin{tikzpicture}
	\node [draw,circle] at (0,0) (a) {\small $1$};
	\node [draw,circle] at (2,0) (b) {\footnotesize $\overline{2d}$} edge node [above] {\small $\lambda$}(a);
	\node [draw,circle] at (4,0) (c)  {\small $\bar{1}$} edge node [above] {\small $1$}(b); 
	\node [draw,circle] at (6,0) (d) {\footnotesize $2d$} edge node [above] {\small $\lambda$}(c);
\end{tikzpicture}	
\end{figure}

We claim that the 2nd eigenvalue of the 1-skeleton of the link of the empty set is at least $\Omega(\frac{\lambda}{1+\lambda})$. 
First notice that this graph has $4d$ vertices. We partition its vertices into 4 sets, $A=\{1_{in},\dots,d_{in}\}, \bar{A}=\{1_{out},\dots,d_{out}\}$ and $B=\{(d+1)_{in},\dots,(2d)_{in}\}, \bar{B}=\{(d+1)_{out},\dots,(2d)_{out}\}$.

For simplicity of calculations, we can write the weight of every edge of the 1-skeleton as the sum of the weights of all facets that contain that edge.
For every $i_{in} \in A, j_{out}\in \bar{A}$,
$$ 
w_{i_{in},j_{out}} = 
\sum_{k=0}^{d-2} {d-2 \choose k} \lambda^{k+1}=\lambda (1+\lambda)^{d-2}$$
Running a similar calculation for all possible  pairs of elements, we obtain the following 1-skeleton of the empty set (after dividing all edge weights by $(1+\lambda)^{d-2}$).
\begin{figure}[htb]\centering
\begin{tikzpicture}
	\node [draw,circle] at (0,0) (a) {$\overline{A}$};
	\node [draw,circle] at (3,0) (b) {$\overline{B}$} edge [color=red,line width=1.4pt] node [above=0.3cm] {\small $2(1+\lambda)-(1+\lambda)^{-(d-2)}$}(a);
	\node [draw,circle] at (0,-3) (c) {$A$} ;
	\node [draw,circle] at (3,-3) (d) {$B$};
	\path [line width=1.2pt,color=blue] (c) edge  node [left] {\small$\lambda$}(a) 	(d) edge (b);
	\path [line width=1.2pt, color=black] 
	(c) edge node [below=0.7cm] {\small $\lambda(1+\lambda)$}(b)
	(d) edge (a);
	\path [color=orange, line width=1.1pt] (c) edge[loop left] node [left] {\small $\lambda^2$} (c) 
	(d) edge [loop right] (d); 
	\path [color=purple, line width=1.1pt] (a) edge [loop left] node [left] {\small $1+(1+\lambda)^2- (1+\lambda)^{-(d-2)}$}(a)
	(b) edge [loop right] (b);
\end{tikzpicture}
\caption{1-skeleton of the Link of $\emptyset$ for complex \cref{ex:hardcore}. Edges of the same color have the same weight.}
\end{figure}
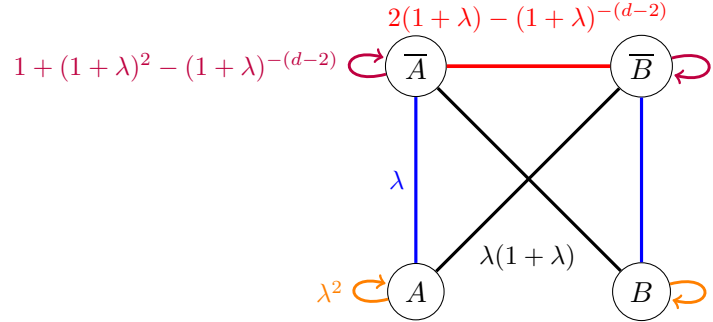

We assume $1/d \ll \lambda\ll 1$ and $d\to\infty$ so we ignore $(1+\lambda)^{-(d-2)}$ low order term.
It follows that $d_w(i_{in})\approx 2d\lambda(1+\lambda)$ and $d_w(i_{out})\approx 2d(2+3\lambda+\lambda^2)$ for $i$. It follows that,
$$ \left|w(E(A\cup B))-\frac{\vol(A\cup B)^2}{\vol(V)}\right|\geq  \left|d^2\lambda^2 - \frac{(2d(2d\lambda(1+\lambda)  ))^2}{2d(2d(2+4\lambda+2\lambda^2))}\right| = \left|d^2\lambda^2 - 2d^2\lambda^2 \right|$$
So, by \cref{fact:expmixinglemma} for $S=A\cup B$ we have
$$ \lambda_2 \geq \frac{d^2\lambda^2}{\vol(A\cup B)} = \frac{d^2\lambda^2}{4d^2\lambda(1+\lambda)} \geq \frac{\lambda}{4(1+\lambda)}.$$
\end{example}
Note that if we apply \cref{thm:main_technical} to the setting of the above example, we obtain $\Delta(i)=d$ for all $1\leq i\leq 2d$. So, if $\lambda\cdot d\leq 1-\delta$, (and $\lambda d\log d\leq \delta^2/10$), our theorem implies that the 2nd eigenvalue of the link of the emptyset is at most $\frac{c(1-\delta)}{2d\cdot \delta}\leq \frac{c\lambda}{\delta}$ which is consistent with the above calculations.

\begin{example}
In this example we construct a totally connected (non-partite) $(d-1)$-dimensional weighted simplicial complex $(X,\pi)$ such that links of co-dimension 2 are $0.5$-spectral expanders, but the 1-skeleton of the link of  the emptyset is only a $1-\text{exp}(-d)$-spectral expander.  
See \cite{LMSY23} for a random family of 3-dim complexes exhibiting a similar growth of eigenvalues.
This in particular shows that local spectral expansion can increase significantly for (non-partite) complexes. Let $B(V,E)$ be the  following {\em barbell} graph: Consider two disjoint cliques $K_1,K_2$ each with $2d$ vertices and connect them with a path of length $d+2$ namely $x_0\in K_1, x_1, \dots, x_{d},x_{d+1}\in K_2$. Note that $x_1,\dots,x_{d}$ do not belong to the cliques.
\begin{figure}[htb]\centering
\begin{tikzpicture}
	\draw (0,0) circle (1.5); 
	\draw (8,0) circle (1.5);
	\node [draw,circle,inner sep=2] at (1.2,0) (a) {} node at (1.2,-0.3) {\small $x_0$}; 
	\node [draw,circle,inner sep=2] at (2.2,0) (b) {} edge (a) node at (2.3,-0.3) {\small $x_1$}; 
	\node [draw,circle,inner sep=2] at (6.9,0) (c) {}  node at (6.7,-0.3) {\footnotesize $x_{d+1}$}; 
	\node [draw,circle,inner sep=2] at (5.5,0) (d) {} edge (c) node at (5.4,-0.3) {\footnotesize $x_{d}$};
	\node at (3,0) () {} edge (b);
	\node at (4.6,0) () {} edge (d);
	\draw [dotted, line width=1.4 pt](3.2,0)-- (4.4,0);
	\node at (-0.5,-.8) () {$K_1$};
	\node at (8.5,-.8) () {$K_2$};
\end{tikzpicture}	
\end{figure}

Now we define a $d-1$-dimensional weighted complex on vertices $V$. The facets of $X$ are precisely  sets $S\subseteq V$ with $|S|=d$ such that the induced graph $B[S]$ is connected. For simplicity of calculations we define the weight of every facet to be 1, i.e., we don't normalize the weights to be a probability distribution. 

We claim that any link $X_\tau$ of co-dimension 2 is $\Omega(1)$-spectral expander. Indeed the worst link is $X_\tau$ for $\tau=\{x_2,\dots,x_{d-1}\}$. The 1-skeleton of $X_\tau$ is the following graph with second eigenvalue $0.5$.
\begin{figure}[htb]\centering
\begin{tikzpicture}
	\node [inner sep=2mm,draw,circle] at (0,0) (a) {$x_0$};
	\node [inner sep=2mm,draw,circle]at (2,0) (b) {$x_1$} edge (a);
	\node [inner sep=2mm,draw,circle] at (4,0) (c) {$x_d$} edge (b);
	\node [inner sep=1.2mm,draw,circle] at (6,0) (d) {\footnotesize $x_{d+1}$} edge (c);
\end{tikzpicture}	
\end{figure}

Now, we claim that the 1-skeleton of $X_{\emptyset}$, $G_0$, has 2nd eigenvalue $1-\frac{1}{\text{exp}(d)}$. First notice $G_0$ has the same vertex set as $B$. 
Now, consider the set $S=K_1\cup \{x_0,x_1,\dots,x_{d/2}\}$.
For simplicity of calculations, we let the weight of every edge $\{x,y\}$ of $G_0$ be sum of the weights of all facets that contains $x,y$.
First, observe that the weight of every edge in $K_1$ is (at least) ${2d-2\choose d-2}$. 
 On the other hand, the weight of any edge in the cut $(S,\overline{S})$ is at most $2{2d\choose d/2}$.
Putting these together it follows that
$$ \phi(S)=\frac{w(E(S,\overline{S}))}{\vol(S)} \geq \frac{2\frac{d}{2}\cdot \frac{5d}{2}\cdot {2d\choose d/4}}{{2d\choose 2} \cdot {2d-2\choose d-2}}\approx 2^{-d/2},$$
for a large enough $d$. 
Therefore, by Cheeger's inequality, \cref{lem:cheeger}, the second eigenvalue of $G_0$ is at least $1-2^{-\Omega(d)}$ for $d$ large enough.
\end{example}

\printbibliography
\end{document}